\documentclass[sn-basic,Numbered]{sn-jnl}

\usepackage{graphicx}
\usepackage{color}
\usepackage{soul}

\usepackage{amssymb,amsmath,amsthm,ragged2e}

\usepackage{algorithm}
\usepackage[]{algpseudocode}
\usepackage{tikz}
\usepackage{pgfplots}
\usepackage{verbatim}
\usepackage{mathrsfs}
\usepackage{lineno}
\usepackage[utf8]{inputenc}
\usetikzlibrary{arrows}

\usepackage{enumerate}

\usepackage{cleveref}

\usepackage{anyfontsize}

\graphicspath{{figs/}}

\pgfplotsset{compat=1.15}

\usepackage{xcolor}
\usepackage{soul}

\newtheorem{theorem}{Theorem}
\newtheorem{lemma}[theorem]{Lemma}
\newtheorem{observation}{Observation}

\newtheorem{definition}[theorem]{Definition}
\newtheorem{claim}{Claim}[theorem]
\renewcommand{\theclaim}{\arabic{claim}}
\newenvironment{claimproof}[1][\proofname\ of Claim \theclaim]{%
  \proof[#1]%
  
}{\endproof}

\Crefname{table}{Table}{Tables}
\Crefname{obs}{Observation}{Observations}
\Crefname{cor}{Corollary}{Corollaries}
\Crefname{algorithm}{Algorithm}{Algorithms}
\Crefname{lemma}{Lemma}{Lemmas}

\algnewcommand\algorithmicforeach{\textbf{for each}}
\algdef{S}[FOR]{ForEach}[1]{\algorithmicforeach\ #1\ \algorithmicdo}

\newcommand{\floor}[1]{\left\lfloor{#1}\right\rfloor}

\newcommand{\REM}[1]{}

\newcommand{\sset}[1]{\{ #1 \}}

\newcommand{\dashedrightarrow}{\dashrightarrow}
\newcommand{\abs}[1]{| #1 |}

\newcommand{\Opt}{\sf{Alg}_{\mathrm{DS}}}

\newcommand{\Gmrg}{\sf{Alg}_{\mathrm{GraphMerge}}}

\newcommand{\SEF}{\emph{Smallest Element First}}

\nolinenumbers

\begin{document}

\title{Improving Order with Queues}

\author[1,2]{\fnm{Andreas} \sur{Karrenbauer}}\email{andreas.karrenbauer@mpi-inf.mpg.de}
\author[1,2]{\fnm{Kurt} \sur{Mehlhorn}}\email{mehlhorn@mpi-inf.mpg.de}
\author[3]{\fnm{Pranabendu} \sur{Misra}}\email{pranabendu@cmi.ac.in}
\author*[1,2]{\fnm{Paolo Luigi} \sur{Rinaldi}}\email{prinaldi@mpi-inf.mpg.de}
\author[4]{\fnm{Anna} \sur{Twelsiek}}\email{anna@twelsiek.com}
\author[5]{\fnm{Alireza} \sur{Haqi}}\email{ahaqi@stanford.edu}
\author[6]{\fnm{Siavash} \sur{Rahimi Shateranloo}}\email{srahimis@caltech.edu}

\affil[1]{\orgdiv{Max Planck Institute for Informatics}, \city{Saarbrücken}, \country{Germany}}
\affil[2]{\orgdiv{Saarland University}, \city{Saarbrücken}, \country{Germany}}
\affil[3]{\orgdiv{Chennai Mathematical Institute (CMI)}, \city{Chennai}, \country{India}}
\affil[4]{\orgdiv{DAKO GmbH}, \city{Jena}, \country{Germany}}
\affil[5]{\orgdiv{Stanford University}, \city{Stanford}, \country{United States}}
\affil[6]{\orgdiv{California Institute of Technology}, \city{Pasadena}, \country{United States}}

\date{}

\keywords{Production, Patience Sort, Improving Order, Queues}
    
\abstract{
Given a sequence of $n$ numbers and $k$ parallel First-in-First-Out (FIFO) queues, how close can one bring the sequence to sorted order? It is known that $k$ queues suffice to sort the sequence if the \emph{Longest Decreasing Subsequence (LDS)}
of the input sequence is at most $k$. But, what if the number of queues is too small for sorting completely.  
\begin{enumerate}
	\item We give a simple algorithm, based on Patience Sort, that reduces the LDS by $k - 1$. We also show, that the algorithm is optimal, i.e., for any $L > 0$ there exists a sequence of LDS $L$ such that the LDS cannot be reduced below $L - k + 1$ with $k$ queues.
    
	\item Merging two sorted queues is at the core of Merge Sort. In contrast, two sequences of LDS two cannot always be merged into a sequence of LDS two. We characterize when it is possible and give an algorithm to decide whether it is possible. Merging into a sequence of LDS three is always possible. 
	
    \item A \emph{down-step} in a sequence is an item immediately followed by a smaller item. We give an optimal algorithm for reducing the number of down-steps. The algorithm is online. 
\end{enumerate}
    Our research was inspired by an application in car manufacturing.

\bigskip

}  %

\maketitle

\bmhead{Acknowledgements}
The research reported by AK and PLR in this paper was carried out in the context of project MoDigPro, which was supported by the European Regional Development Fund (ERDF). All authors were at the Max Planck Institute for Informatics when this research was performed. AK, KM, PM, and PLR are the main authors of the paper. AT contributed at the beginning of this research, and AH and SRS contributed near the end. We thank Leo Wennmann for the valuable conversations at the beginning of the project.

\section{Introduction}
A classic problem in computer science is the following: Given a sequence $\pi$ of $n$ numbers, and a collection of parallel queues, can we sort $\pi$ using these queues? Throughout this paper, sorting means \emph{sorting into ascending order}. 
Queues operate on a First-In-First-Out(FIFO) principle.
The elements of $\pi$ are, one by one, pushed into one of the queues, and the output sequence is generated by popping elements from one of the queues, one by one. 

This problem was studied by Tarjan~\cite{tarjan1972sorting}, inspired by a problem posed by Knuth~\cite{knuth68}, and also by Even and Itai~\cite{even1971queues}.
It was studied even earlier by
Schensted~\cite{schensted1961longest}, and later by Mallows~\cite{mallows1963patience} in the context of certain card games (Patience and Floyd's game). It is known that one can sort $\pi$ if and only if the number of queues is at least the length of the longest decreasing subsequence (LDS) of $\pi$~\cite{tarjan1972sorting}, and that Patience Sort~\cite{aldous1999longest} uses the minimum number of queues. 
The problem is related to the chromatic number of the permutation graph~\cite{golumbic2004algorithmic,even1971queues} of the sequence: the LDS of $\pi$ is equal to the chromatic number of a certain conflict graph of $\pi$, which is a permutation graph. It is also related to certain properties of the Young Tableaux~\cite{schensted1961longest,aldous1999longest}.

Patience Sort~\cite{aldous1999longest}, a sorting algorithm inspired by the card game Patience, sorts a sequence of numbers using a minimal number of queues. The input is a sequence $\pi_1$ to $\pi_n$ of $n$ distinct numbers, and the goal is to sort the numbers into increasing order. One starts with a set of empty queues $Q_1$, $Q_2$, \ldots\ and then enqueues the numbers one by one. When $\pi_i$ is to be enqueued, one determines the smallest index $j$ such that either $\pi_i$ is larger than the last element in the $j$-th queue, or the $j$-th queue is empty and then appends $\pi_i$ to the $j$-th queue. In this way, an increasing sequence is constructed in each queue. %
Once all elements are enqueued, the sorted sequence is extracted by repeatedly dequeuing the smallest first element of any queue.
We call this dequeuing strategy \SEF{} for later reference.
The number of queues required by Patience Sort is the length of the longest decreasing subsequence in the input sequence, and no algorithm can sort with fewer queues. Moreover, a permutation can be sorted again using $k$ queues if it avoids the pattern $(k+1, k,\ldots, 1)$.

In this paper we ask: What can one do if the number of queues is \emph{fixed} and maybe smaller than the LDS of the input? This is a natural generalization and an interesting combinatorial problem in its own right.
Surprisingly, not much was known about it. We also ask: How does one measure the \emph{disorder} of a sequence, and how much of it can one remove with $k$ queues?
We will consider two measures of disorder:
\begin{itemize}
    \item The length of the longest decreasing subsequence of a sequence, and 
    \item The number of \emph{down-steps} in a sequence, i.e., the number of elements that are immediately followed by a smaller element.
\end{itemize}

Let $k$ be the number of available queues, let $n$ be the length of the input sequence, and let $L$ be the LDS of this sequence. The LDS is $1$ if and only if the number of down-steps is $0$, which happens only for sorted sequences. Generally,  the number of down-steps is always at least as large as the LDS minus one; this holds since any two elements in a longest decreasing subsequence must be separated by a down-step. On the other hand, there exist sequences of length $n$, where the number of down-steps is $n/2$ while the LDS is only $2$, e.g., $2, 1, 4, 3, \ldots, n, n-1$. 

For LDS, we first give a simple algorithm based on Patience Sort that reduces the LDS from $L$ to $L - k + 1$. It can be considered as a truncated Patience Sort and runs in time $O(n \log k)$. 

\begin{theorem}[Refined Patience Sort]\label{Refined Patience Sort} Let $L \ge 1$ be arbitrary. With $k$ queues and in time $O(n \log k)$, the LDS of any sequence with length $n$ and LDS $L$ can be reduced to $\max(L - k + 1,1)$.
\end{theorem}

One cannot do better in the worst case, as the next theorem says. 

\begin{theorem}[Optimality of Refined Patience Sort]\label{intro:k-queue-thm}
 For every $L$, there exists a sequence of LDS $L$ that cannot be improved to an LDS strictly less than $L - k + 1$ with $k$ parallel queues. The sequence has length $\Theta(L^{2k - 1})$.
\end{theorem}

Theorem~\ref{intro:k-queue-thm} does not exclude the existence of an algorithm that reduces for every input $\pi$ the LDS in a way optimal for this instance. 
The LDS can certainly not be reduced below $L/k$, as at least one of the queues must contain at least that many elements from an LDS of the input, and elements in the same queue cannot overtake each other. This lower bound suggests a natural strategy: Distribute the elements of the input sequence to the $k$ queues, such that the LDS of each queue is at most $\lceil L/k \rceil$, and then merge these sequences optimally.

How does one merge non-sorted sequences optimally? We make a first step and show how to optimally merge two sequences of LDS $2$; we leave open the general question. 

\begin{theorem}\label{intro:mergeability}
	Two sequences of LDS $2$ can always be merged into a sequence of LDS 3. There exist two sequences of LDS $2$ that cannot be merged into a sequence of LDS $2$.
	 
	Further, there is an algorithm that, given two sequences of LDS $2$, either merges them into a sequence of LDS $2$, or demonstrates that this is impossible. It can be implemented to run in linear time in $n$, the sum of the length of the two input sequences.
\end{theorem}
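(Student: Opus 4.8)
The plan is to split the statement into its two halves: an explicit bad pair of sequences, and a polynomial-time (indeed linear-time) decision-and-construction algorithm via a reduction to \textsc{$2$-SAT}.

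For the negative part, I would look for the smallest obstruction. A sequence of LDS $2$ is one that avoids the decreasing pattern $(3,2,1)$; equivalently, it can be written as the union of two increasing subsequences. When we interleave (merge) two such queues $A$ and $B$, each element keeps its relative order within its own queue, so the merged sequence is an arbitrary \emph{shuffle} of $A$ and $B$. We want to know whether some shuffle avoids $(3,2,1)$. To build a counterexample, take $A$ and $B$ each being a short LDS-$2$ sequence (say $A = a_1 a_2 a_3 a_4$, $B = b_1 b_2 b_3 b_4$ on $8$ distinct values) whose value patterns are arranged so that, no matter how the shuffle is chosen, three values appear in strictly decreasing order. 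The key idea is to pick the values so that $A$ contributes a potential ``middle+bottom'' of a decreasing triple in one way and $B$ contributes the complementary ``top'' in a conflicting way, and vice versa, so the two constraints needed to avoid a bad triple are mutually exclusive. I would exhibit one concrete such pair and verify by a short case analysis over the (finitely many, up to the relevant order relations) shuffles that every merge contains a decreasing triple.

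For the algorithmic part, the central observation is that a shuffle of $A$ and $B$ is completely determined by, for each pair $(a,b)$ with $a \in A$, $b \in B$, the binary choice ``$a$ comes before $b$ in the output'' — call this Boolean variable $x_{a,b}$. Not all assignments of these variables are realizable as shuffles: the order within $A$ and within $B$ is fixed, which imposes that if $a \prec_A a'$ then $x_{a,b} \Rightarrow x_{a',b}$ is \emph{not} forced but $x_{a',b} \Rightarrow x_{a,b}$... more precisely, for fixed $b$ the set $\{a : x_{a,b} = \text{true}\}$ must be a prefix of $A$, and symmetrically; these ``prefix/staircase'' consistency conditions are themselves expressible as implications $\lnot x_{a,b} \Rightarrow \lnot x_{a',b}$ for $a \prec_A a'$, hence as $2$-clauses. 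On top of that, avoiding a decreasing triple means: for every triple of elements $u > v > w$ (as values), we must forbid the event ``$u$ before $v$ before $w$ in the output''. Triples lying entirely inside $A$ or entirely inside $B$ never occur (both have LDS $2$), so only ``mixed'' triples matter, and in a mixed triple at least two of the three elements lie in the same queue, so at least one of the three pairwise orderings is already fixed by $\prec_A$ or $\prec_B$. That means the forbidden event ``$u$ before $v$ before $w$'' is, after substituting the fixed orderings, a conjunction of at most two of the variables $x_{\cdot,\cdot}$ (or their negations); forbidding it is a clause of width at most $2$. Collecting all these clauses gives a \textsc{$2$-SAT} instance whose satisfying assignments correspond exactly to merges of LDS $\le 2$; running a linear-time \textsc{$2$-SAT} solver (Aspvall--Plass--Tarjan) decides the question, and a satisfying assignment is decoded into the actual merged sequence.

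The main obstacle is the size and the case analysis of the reduction: a priori there are $\Theta(n^2)$ pair-variables and $\Theta(n^3)$ mixed triples, which would only give a polynomial bound, not the claimed linear one. To get linear time I would argue that the vast majority of these clauses are redundant. Using that each of $A$, $B$ decomposes into two increasing runs, one can show that the relevant forbidden triples are generated by a linear-size family: a decreasing pair within one queue can only come from its two runs crossing, and the ``third element'' from the other queue that could complete a bad triple is constrained to an interval of values, so a compact representation of the implication graph (with $O(n)$ vertices and $O(n)$ edges, after contracting chains of forced implications) suffices. Making this compression precise — identifying exactly which $O(n)$ clauses capture all the constraints, and proving no bad triple escapes them — is the delicate step; the rest is a direct appeal to linear-time \textsc{$2$-SAT} together with an $O(n)$ decoding of the chosen linear extension.
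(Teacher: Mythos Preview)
Your reduction to \textsc{$2$-SAT} is sound: the consistency implications $x_{a',b}\Rightarrow x_{a,b}$ (for $a\prec_A a'$) together with their $B$-analogues do characterize exactly the realizable shuffles, and every mixed triple yields a clause of width at most two because one of the three pairwise orderings is fixed by the queue order. So the polynomial-time half of the theorem is fully covered by your plan, and it is close in spirit to the paper's own \textsc{$2$-SAT} formulation. The paper, however, encodes things differently: instead of one Boolean per cross pair $(a,b)$, it builds a small \emph{precedence graph} whose nodes are the elements and whose edges encode forced orderings (``strict edges'' between endpoints of down-pairs in the two queues, plus ``tentative edge'' pairs $b\dashrightarrow e\dashrightarrow a$ when a solitaire $e$ sits value-wise between the two ends of a down-pair $a,b$); the \textsc{$2$-SAT} variables are then these edges, with clauses for strictness, for companion tentative pairs, and for crossing edges. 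This is a more economical encoding, but both routes give polynomial time.

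The real divergence is in how linear time is obtained. You propose to \emph{compress the $2$-SAT instance} down to $O(n)$ clauses by exploiting the two-run structure of each input; you flag this as ``the delicate step'' and leave it as a sketch. The paper does \emph{not} do this. Instead it proves that a simple greedy merge --- at each step classify the two queue heads as free / half-free / blocked and move one according to four local rules --- is correct, and then shows (Lemma on ``mergability refined'') that these classifications can be read off in $O(1)$ time from precomputed data: the type (left/right/solitaire) of every element, suffix-minima of each type in each queue, and the largest left element already dequeued. That gives $O(n)$ directly, with no explicit \textsc{$2$-SAT} instance at all. Your compression idea is plausible, but turning it into an honest proof would require showing that a specific $O(n)$-size family of clauses implies every mixed-triple clause and every consistency clause; this is not obvious and is essentially a second, independent proof. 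If you want the linear bound, it is simpler to follow the paper's greedy-plus-preprocessing route.

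Finally, for the negative half you describe a search strategy but never exhibit the pair; the paper simply writes down $Q_1=7,1,9,3$ and $Q_2=4,2,5,8,6$ and lets the precedence-graph machinery (or a short case analysis) certify non-mergeability. You should do likewise.
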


Polynomial running time can be achieved via a reduction to the {\sc $2$-SAT} problem. Linear running time requires additional work. 

In Section~\ref{DownSteps} we turn to a second measure of disorder, namely the number of down-steps. Recall that this is the number of items in the sequence that are immediately followed by a smaller item. Intuitively, each down-step represents a `hiccup' in the sequence, and our objective is to minimize these.

We give an instance-optimal algorithm $\Opt$ for decreasing the number of down-steps, i.e., no algorithm can output a sequence with fewer down-steps than $\Opt$ does. The algorithm is \emph{online}, i.e., the decision where to enqueue a particular element $\pi_i$ does not depend on the elements after $\pi_i$ in the input sequence. Further, it is essentially the unique optimal online algorithm in the sense that if any other online algorithm $\sf A$ deviates from $\Opt$ at any step, then there is a continuation of the sequence for which $\sf A$ produces an output with more down-steps than $\Opt$. The algorithm runs in time $O(n \log k)$, where $n$ is the length of the input sequence, and $k$ is the number of available queues. We thus completely characterize the power of $k$ parallel queues in reducing the number of down-steps. 

\begin{theorem}\label{intro:downsteps}
	 There is an algorithm $\Opt$ that constructs an output sequence with a minimum number of down-steps. It is online and can be implemented to run in time $O(n \log k)$. Further, if any other online algorithm $\sf A$ deviates from $\Opt$, the input sequence can be extended such that $\sf A$ does strictly worse than $\Opt$. 
\end{theorem}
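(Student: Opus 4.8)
The plan is to first pin down the right ``state'' that an online algorithm should maintain, then show a greedy rule on that state is both feasible and optimal, and finally that the optimal choice is essentially forced. Since a down-step is created precisely when we append an element $\pi_i$ to a queue whose current last element exceeds $\pi_i$, the only thing that matters about a queue for the future is its last (most recently enqueued) element; moreover, since the output is produced by some dequeuing order, what matters about the whole configuration is the multiset of the $k$ queue-tails together with, for each queue, whether it is currently ``clean'' (no down-step yet in its enqueued portion) or already ``dirty.'' I would argue (as in the Patience-sort analysis) that there is no loss in using \SEF{} or a similar canonical dequeuing strategy, so the entire problem reduces to the online \emph{enqueueing} decision, and the number of down-steps in the final output equals the number of down-steps forced during enqueueing plus the down-steps created when concatenating the $k$ queues at dequeue time — and the latter can be made $0$ by a routine argument (sort the queues by first element, break them at down-steps). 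Thus the objective is exactly: minimize the number of ``bad'' appends.

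Next I would describe $\Opt$ explicitly. When $\pi_i$ arrives, among all queues whose last element is $<\pi_i$ (an append here is ``free''), pick one by a tie-break rule — I expect the right rule is the one mirroring Patience sort: append to the queue with the \emph{largest} last element that is still below $\pi_i$ (this ``wastes'' the least headroom). If no such queue exists, every placement costs one down-step; then $\Opt$ should append to the queue with the \emph{smallest} last element, so as to leave the other queues' tails as small as possible for future free appends, and — crucially — it resets that queue's role. The correctness proof is an exchange argument: define a domination order on configurations (configuration $C$ dominates $C'$ if there is a bijection between queues under which each tail of $C$ is $\le$ the corresponding tail of $C'$, with the down-step counts agreeing or $C$'s being no larger), show this order is preserved step-by-step by $\Opt$ against any other algorithm, and show that a dominating configuration can always finish with no more down-steps. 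This is the standard ``greedy stays ahead'' skeleton; the content is checking that the two-case tie-break above is exactly what makes domination monotone.

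For the running time, maintaining the $k$ queue-tails in a balanced BST (or the Patience-sort pile structure) supports each enqueue decision in $O(\log k)$: a predecessor query for the free case, a minimum query for the costly case, plus an update — giving $O(n\log k)$ total, with the final concatenation linear in $n$.

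The uniqueness claim is the adversary part and is where I expect the real work. Suppose an online $\sf A$ at some step makes a different choice than $\Opt$ on a prefix $\pi_1,\dots,\pi_i$; I want a suffix forcing $\sf A$ to incur an extra down-step. The point is that after the deviation, $\sf A$'s configuration is \emph{not} dominated by $\Opt$'s — either $\sf A$ has strictly more down-steps already (done), or it has the same count but a strictly ``worse'' tail multiset, i.e.\ some tail is larger than it would have been. Then I construct the suffix that exploits exactly that larger tail: feed a carefully chosen decreasing-then-increasing gadget that can be absorbed for free from $\Opt$'s tails but forces a down-step somewhere against $\sf A$'s. The main obstacle is making this gadget work \emph{uniformly} regardless of which of the two tie-break cases was violated and regardless of the rest of the configuration — i.e.\ proving that ``not dominated'' is actually \emph{witnessed} by a realizable continuation, not just an abstract order relation. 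I would handle this by showing the domination order is a lattice-like order with a clean combinatorial characterization of incomparability, and that each ``reason for incomparability'' (an inverted pair of tails, or an extra dirty queue) has a short dedicated adversarial suffix; concatenating a suffix that first neutralizes the irrelevant queues (by filling them to large values) and then plays the relevant two-queue gadget does the job.
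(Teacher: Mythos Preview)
Your proposal has two concrete errors, the second of which undermines the whole reduction.

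First, your costly-case rule is backwards. When $e$ is smaller than every tail, you say to append to the queue with the \emph{smallest} last element; the correct rule (and the paper's rule) is to append to the queue with the \emph{largest} last element. Your own justification (``leave the other queues' tails as small as possible'') in fact argues for the largest-tail choice: overwriting the largest tail with $e$ leaves the remaining $k-1$ smaller tails intact, whereas overwriting the smallest tail leaves the $k-1$ larger ones. A two-queue example shows your stated rule is strictly worse even for total enqueued down-steps: with tails $(10,5)$ and continuation $3,7,4$, your rule incurs two down-steps while the correct rule incurs one.

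Second, and more seriously, your claim that the number of down-steps in the output equals the number of down-steps forced during enqueueing (after making concatenation down-steps zero) is false. The paper's dequeuing merges the $j$-th runs of all $k$ queues into a single run, so if the total number of enqueued down-steps is $d$ and they are spread evenly, the output has only $\lceil d/k\rceil$ down-steps. You are therefore minimizing the wrong objective. The paper's algorithm does two things simultaneously: it minimizes the total number of enqueued down-steps \emph{and} it distributes them in round-robin fashion across the queues, by rotating a ``base queue'' pointer each time a down-step is created. Your framework has no place for this second ingredient, and without it neither optimality nor uniqueness goes through: an algorithm that also minimizes total down-steps but piles them into one queue is not separated from $\Opt$ by your domination order, yet its output is strictly worse.

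Your exchange/domination skeleton is close in spirit to one of the paper's proofs (a potential argument via a bijection between the tails of $A$ and of $\Opt$), and the adversarial-suffix idea for uniqueness is the right shape (the paper feeds a short decreasing sequence of values just above certain $\ell_i$'s). But both need to be run against the correct algorithm and, crucially, the correct output objective $\lceil d/k\rceil$ rather than $d$.
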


\subsection*{Our Motivation and Related Results}\label{Motivation and Related Work}
Our research
was motivated by an application in car manufacturing~\cite{karrenbauer2025optimizingcarresequencingmixedmodel}.
The production in a car factory consists of three main stages: body (welding of the body), paint (painting the body), and assembly (adding engines, seats, \ldots). Nowadays, cars can be configured in many different variants (different colors, different engines, different interiors, different entertainment systems, different sensors, \ldots), and the sequence of cars passing through the production process must satisfy the constraints of the factory floor and machinery. For example, no two subsequent cars may have the largest engine, only one out of three subsequent cars may have a roof-top, color changes cost money, and so on. Planning a sequence violating a (near) minimum number of constraints is an active area of research~\cite{GagneGravel,MorinGagne,Twelsiek}.

Even if a sequence that obeys all constraints is available at the beginning of the production process, this sequence will, in general, be perturbed in body and paint, e.g., because a welding point is not perfect and needs to be redone, or the paint is not perfect and needs to be retouched. For this reason, there are buffers between body and paint and paint and assembly, where the perturbed sequence can be improved to satisfy more constraints. In some factories, these buffers consist of a set of parallel queues that operate according to the First-In-First-Out (FIFO) principle. This leads to our research question: ``How much disorder can one remove using a given set of queues?'' Surprisingly, we could not find any literature that addresses this problem for the two measures of order we consider, i.e., LDS and number of down-steps. 

Boysen and Emde~\cite{Boysen:stack-blockage} study a closely related problem under the name \emph{parallel stack loading problem (PSLP)}. They consider only the enqueuing phase, and the goal is to minimize the total number of down-steps in all queues.
They show that miminizing the number of down-steps in the enqueued sequences is NP-complete if the queue lengths are bounded; \cite{Boge-Knust} gives an alternative NP-completeness proof.
Gopalan et al.~\cite{Gopalan} study a streaming algorithm with the goal of approximately miminizing the number of edit-operations required for transforming an input sequence into a sorted sequence. This measure is also called the \emph{Ulam metric}. For a sequence of length $n$, this measure is $n$ minus the length of the longest decreasing subsequence. Parallel algorithms for computing the edit and the Ulam distance are given by Boroujeni et al.~\cite{Boroujeni}.
Aldous and Diaconis\cite{aldous1999longest} discuss the many interesting properties of Patience Sort. Chandramouli et al.~\cite{Chandramouli} give a high-performance implemention of Patience Sort for almost sorted data.

\paragraph*{Organization of the Paper}
In Section~\ref{LDS}, we prove Theorems~\ref{Refined Patience Sort} and~\ref{intro:k-queue-thm}, in Section~\ref{Mergability}, we prove Theorem~\ref{intro:mergeability}, and finally in Section~\ref{DownSteps}, we give a proof of Theorem~\ref{intro:downsteps}. We conclude in Section~\ref{conclusion} with some open problems.

\section{Reducing the Longest Decreasing Subsequence}\label{LDS}

We begin with the first measure of disorder of a sequence, namely the length of the longest decreasing subsequence (LDS).
We are interested in the following question: How much can one reduce the LDS of a sequence using $k$ parallel queues?
We first present a simple algorithm based on Patience Sort~\cite{aldous1999longest} that reduces the LDS from $L$ to $L - k + 1$. We then show that this is best possible in the worst case, by exhibiting, in Section~\ref{Lower Bounds}, for each integer $L$, a sequence of LDS $L$ whose LDS cannot be decreased below $L - k + 1$ with $k$ queues.

\subsection{Reducing LDS via Patience Sort}\label{Alternative Queuing}

We give a proof of Theorem~\ref{Refined Patience Sort}. Our algorithm is based on Patience Sort. Patience Sort sorts a sequence using LDS-many queues. In each queue, a decreasing sequence is constructed, and these are then merged. We use the first $k - 1$ queues as in Patience Sort, i.e., construct a decreasing sequence in each one of them, and use the $k$-th queue for all the remaining queues that Patience Sort would open.

\begin{claimproof}[Proof of Theorem~\ref{Refined Patience Sort}]
    Let us name the queues $Q_1, Q_2, \ldots, Q_k$. 
    The elements of the input sequence are processed one by one from the first to the last one. If $Q_k$ is still empty or the current element is larger than the last one in $Q_k$, put the current element in the same queue as Patience Sort would do, i.e., choose the smallest $i$ such that $Q_i$ is empty or the current element is larger than the last element of $Q_i$. Otherwise, put the current element in $Q_k$.
	This algorithm runs in $O(n \log k)$ time by using binary search because the last elements of $Q_1,\ldots, Q_k$ always form a decreasing sequence.
    Moreover, the first $k-1$ queues have LDS $1$.
    This leaves an LDS of $L-k+1$ for the $k$-th queue because Patience Sort partitions the input sequence into $L$ increasing subsequences, say $Q'_1,\ldots,Q'_L$, where $Q_i = Q'_i$ for $i \in [k-1]$.
    Let us have a closer look at $Q_k$. Let $x$ and $y$ be consecutive elements in $Q'_k$. Then, all elements in $Q_k$ between $x$ and $y$ are smaller than $x$, as an element larger than $x$ would have been added to one of the queues $Q'_1$ to $Q'_k$. 

    We now merge the queues $Q_1$ to $Q_k$ using the smallest element first strategy. We claim this merges $Q'_1,\ldots,Q'_{k}$ into an increasing sequence $S$, thus yielding an output that can be partitioned into $L-k+1$ increasing sequences, namely $S$ and $Q'_{k+1}$ to $Q'_L$. Indeed, assume we have just output an element $x$ in $Q'_k$ from $Q_k$, and let $y$ be the next element from $Q'_k$ in $Q_k$. Then the first elements of $Q_1$ to $Q_{k-1}$ are larger than $x$, and all elements in $Q_k$ before $y$ are smaller than $x$. Thus, they will be output next until $y$ is at the front of $Q_k$. Now the merge of $Q_1$ to $Q_k$ will continue in the same way as a merge of $Q'_1$ to $Q'_k$ until $y$ is dequeued.
    To implement the dequeuing in $O(n \log k)$ time, we maintain a binary heap containing the indices from $1$ to $k$ such that the heap property is considered w.r.t.\ the first elements of the respective queues (where the first element of an empty queue is defined to be $\infty$). Hence, the queue with the smallest first element can be determined in constant time. After dequeuing the smallest element, the heap property can be restored in $O(\log k)$ time, which yields $O(n \log k)$ time for dequeuing all elements.

\end{claimproof}

The above algorithm is simple. It is natural to ask whether a more sophisticated algorithm can give better results. The answer is no in the worst case. We show in Section~\ref{Lower Bounds} that the algorithm above is instance-optimal and no algorithm can do better on all inputs. Our lower bound does not exclude that one obtains a smaller LDS for some/most inputs, perhaps by better balancing the LDS of the queues, because the LDS in the output cannot be smaller than the maximum LDS over the $k$ queues. For example, this could be done by putting for each $i \in [k]$ into $Q_i$ all elements from $Q'_j$ for each $j \in [L]$ with $j \equiv i \bmod k$. Then, the LDS of each of $Q_1,\ldots, Q_k$ is at most $\lceil L/k \rceil$.

We would then also have to look at a different merging strategy, as the smallest element first strategy does not always yield the optimum merge, as the following example shows. Consider $k=2$ and the input sequence $5, 4, 1, 3, 2$ with LDS $L=4$. The \emph{modulo} procedure described above yields the two queues
\begin{align*}
 Q_1 &: 5,1,3 & Q_2 : 4, 2, 
\end{align*}
and the smallest element first strategy merges them to $4, 2, 5, 1, 3$ with LDS $3$. However, we can also obtain $4, 5, 1, 2, 3$ with LDS $2$, which is optimal, as both queues have LDS $2$, and at least one of the two queues must have an LDS of at least $L/k$.

Using two queues, two sequences of LDS two can always be merged into a sequence of LDS three (Theorem~\ref{intro:mergeability}) and maybe into a sequence of LDS two (the preceding example). In Section~\ref{Mergability} we show that two cannot always be obtained and derive an optimal merging strategy.

\subsection{The Limits of Reducing LDS}\label{Lower Bounds}

We show that for any $L$ and $k$, there is a sequence with LDS $L$ such that it is impossible to reduce the LDS  to a value lower than $L - k + 1$ with $k$ queues. The length of the sequence is $\Theta(L^{2k-1})$. We first present a construction for $k=2$ (Section~\ref{lower2}) and then extend it to arbitrary $k$ (Section~\ref{lowerk}). We then show the bound on the length of the constructed sequence.
We begin with the following definition.
\begin{definition}
    For a sequence $A$, let $\Lambda_k(A)$ be the lowest LDS of any sequence that can be obtained by enqueuing and dequeuing $A$ using $k$ queues.
\end{definition}

 In our construction, we use the direct and the skew sum of permutations. Given two permutations $\pi$ and $\sigma$ and their respective permutation matrices $M_{\pi}$ and $M_{\sigma}$, the \emph{direct sum} and the \emph{skew sum} of these two permutations, in terms of permutation matrices, take the form
\begin{align*}
M_{\pi} \oplus M_{\sigma} &= \begin{bmatrix}
M_{\pi} & 0 \\
0 & M_{\sigma}
\end{bmatrix} \qquad \text{and}&
M_{\pi} \ominus M_{\sigma} &= \begin{bmatrix}
0 & M_{\pi} \\
M_{\sigma} & 0
\end{bmatrix}\text{, respectively.}
\end{align*}
Equivalently, we can define the \emph{direct sum} of two permutations $\pi$ and $\sigma$ as
$$
\pi \oplus \sigma = \pi \mathbin\Vert \sigma',
$$
where $\sigma'$ is obtained from $\sigma$ by adding $\abs{\pi}$ to all its elements, and the \emph{skew sum} as
$$
\pi \ominus \sigma   = \pi' \mathbin\Vert \sigma,
$$
where $\pi'$ is obtained from $\pi$ by adding $\abs{\sigma}$ to all its elements.

\subsubsection{Lower Bound for two Queues} \label{lower2}

We give a proof of Theorem~\ref{intro:k-queue-thm} for the special case of two queues. We exhibit for each $L \in \mathbb{N}$ a permutation $\pi$ with LDS $L$
such that any manner of enqueueing it into two queues and then dequeueing it yields a permutation of LDS at least $L - 1$.
We prove this claim by induction on $L$. The following observation is simple but crucial for the lower bound construction.

\begin{observation}\label{skewdirect}
	\begin{itemize}
		\item Consider the direct sum $\pi \oplus \sigma$. In the resulting sequence, all the elements that originally belonged to $\pi$ are lower than those that originally belonged to $\sigma$. Furthermore, the LDS of $\pi \oplus \sigma$ is the maximum of the LDS of $\pi$ and the LDS of $\sigma$.

        \item Consider the skew sum $\pi \ominus \sigma$. In the resulting sequence, all the elements that originally belonged to $\pi$ are greater than those that originally belonged to $\sigma$. Furthermore, the LDS of $\pi \ominus \sigma$ is the sum of the LDS of $\pi$ and the LDS of $\sigma$.
	\end{itemize}
\end{observation}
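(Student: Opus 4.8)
The plan is to treat the two bullets separately and, within each, to dispatch the value-ordering claim immediately from the definition and then establish the LDS identity by matching a lower bound (an explicit decreasing subsequence) against an upper bound (a splitting argument). First I would record the two elementary facts that drive everything. Fact one: shifting every entry of a permutation by a fixed constant preserves the relative order of any two entries, hence carries decreasing subsequences to decreasing subsequences of the same length and leaves the LDS unchanged; so $\mathrm{LDS}(\sigma') = \mathrm{LDS}(\sigma)$ and $\mathrm{LDS}(\pi') = \mathrm{LDS}(\pi)$ for the shifted copies $\sigma'$ and $\pi'$ used in the definitions. Fact two: in any concatenation $A \mathbin\Vert B$, a decreasing subsequence occupies positions that are increasing, so its entries taken from $A$ all precede its entries taken from $B$; thus every decreasing subsequence of $A \mathbin\Vert B$ splits uniquely into a (possibly empty) decreasing subsequence of $A$ followed by one of $B$. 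The value-ordering statements in both bullets are then immediate: in $\pi \oplus \sigma = \pi \mathbin\Vert \sigma'$ the entries of $\pi$ lie in $\{1,\dots,\abs{\pi}\}$ while those of $\sigma'$ lie in $\{\abs{\pi}+1,\dots,\abs{\pi}+\abs{\sigma}\}$, and symmetrically in $\pi \ominus \sigma = \pi' \mathbin\Vert \sigma$ the entries of $\pi'$ all exceed those of $\sigma$.

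For the direct sum I would prove $\mathrm{LDS}(\pi \oplus \sigma) = \max(\mathrm{LDS}(\pi), \mathrm{LDS}(\sigma))$ in two directions. The lower bound is immediate: $\pi$ and $\sigma'$ each appear as a contiguous block of $\pi \oplus \sigma$, so by Fact one a longest decreasing subsequence of either block is a decreasing subsequence of the whole, giving $\mathrm{LDS}(\pi \oplus \sigma) \ge \max(\mathrm{LDS}(\pi), \mathrm{LDS}(\sigma))$. For the upper bound I would take any decreasing subsequence $D$ and apply Fact two to split it as $D_\pi$ followed by $D_{\sigma'}$. If both parts were nonempty, the last chosen entry of $D_\pi$ would precede in position the first chosen entry of $D_{\sigma'}$, forcing the former to be the larger value; but every entry of $\pi$ is at most $\abs{\pi}$ and every entry of $\sigma'$ is at least $\abs{\pi}+1$, a contradiction. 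Hence $D$ lies entirely in one block, so $\abs{D} \le \max(\mathrm{LDS}(\pi), \mathrm{LDS}(\sigma))$.

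For the skew sum I would prove $\mathrm{LDS}(\pi \ominus \sigma) = \mathrm{LDS}(\pi) + \mathrm{LDS}(\sigma)$, again in two directions, and here the value ordering works in our favour rather than against us. For the lower bound, concatenate a longest decreasing subsequence of the $\pi'$-block with a longest decreasing subsequence of the $\sigma$-block; the $\pi'$-entries come first in position and, by the value-ordering claim, are all strictly larger than the $\sigma$-entries, so the concatenation is itself decreasing and has length $\mathrm{LDS}(\pi) + \mathrm{LDS}(\sigma)$. For the upper bound I would once more split an arbitrary decreasing subsequence $D$ by Fact two into $D_{\pi'}$ followed by $D_\sigma$; each part is a decreasing subsequence of its block, so $\abs{D_{\pi'}} \le \mathrm{LDS}(\pi)$ and $\abs{D_\sigma} \le \mathrm{LDS}(\sigma)$ by Fact one, whence $\abs{D} \le \mathrm{LDS}(\pi) + \mathrm{LDS}(\sigma)$.

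The argument is essentially routine, so there is no serious obstacle; the only point requiring care is the splitting step (Fact two) together with keeping straight the direction of the value comparison in the two cases. The very same positional splitting yields a \emph{contradiction} for the direct sum, where the first block carries the small values and so no decreasing subsequence can straddle the boundary, but a \emph{valid construction} for the skew sum, where the first block carries the large values and so the two pieces glue together into one long decreasing subsequence. Everything else is bookkeeping over the two shift operations.
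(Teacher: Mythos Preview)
Your proof is correct and complete. The paper itself states this as an Observation without proof, treating both claims as immediate consequences of the definitions of $\oplus$ and $\ominus$; your write-up simply spells out the routine details (the shift-invariance of LDS and the positional splitting of a decreasing subsequence across a concatenation) that the paper leaves implicit.
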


\begin{claimproof}[Proof of Theorem~\ref{intro:k-queue-thm} for $k = 2$] For $L \le 2$, the claim is obvious. For $L = 3$, consider $\pi = [3, 2, 1]$. The LDS is three and with only two queues it is not possible to obtain an output LDS lower than two, since two of the elements have to be enqueued into the same queue. 
For $L \ge 3$, we define $A_{L + 1}$ as
	\[ A_{L + 1} = (A_L \oplus \underbrace{D_L \oplus \dots \oplus D_L}_{L - 1 \text{ times}}) \ominus [1],\]
	where $D_L$ denotes the decreasing sequence of length $L$, and the corresponding permutation matrix is simply the anti-diagonal matrix of size $L$.
	The permutation matrix of $A_{L+1}$ is of the form
	$$
	A_{L + 1} = \begin{bmatrix} 
		0 & A_{L} & 0 & 0 & \dots & 0 \\
		0 & 0 & D_{L} & 0 & \dots & 0 \\
		0 & 0 & 0 & D_{L} & \dots & 0 \\
		\vdots & \vdots & \vdots & \vdots & \ddots & \vdots \\
		0 & 0 & 0 & 0 & \dots  & D_{L} \\
		1 & 0 & 0 & 0 & \dots  & 0 
	\end{bmatrix}.
	$$
	According to Observation~\ref{skewdirect}, $A_{L + 1}$ has LDS $L+1$. We prove $\Lambda_2(A_{L + 1}) \geq L$. We refer to $B_i$ and $C_i$ as the subsequences of the $i$-th copy of $D_L$ that are enqueued into the first and the second queue, respectively, where $1 \leq i \leq L - 1$. In the same way, we call $X$ and $Y$ the sequences in the first and second queue enqueued from $A_L$.
	W.l.o.g.~assume that the element $1$ is inserted into the first queue. So after enqueuing, we have 
    \[   Q_1: X, B_1, \ldots, B_{L-1}, 1 \qquad  Q_2: Y, C_1, \ldots, C_{L-1}, \]
    and 1 is dequeued after $X$ and all the sequences $B_i$.
    We consider two cases:
	\begin{itemize}
		\item The element $1$ is dequeued after the last element of $Y$: Since the elements of $X$ are in the same queue as the element $1$, the $1$ is also dequeued after the last element of $X$. Therefore, all the elements of $X$ and $Y$ will appear before the element $1$ in the final sequence. By the inductive hypothesis, merging the two sequences $X$ and $Y$ will create a decreasing sequence of length $L - 1$. Since the element $1$ is dequeued afterwards, we obtain a decreasing sequence of length $L$.
		\item The element $1$ is dequeued before the last element of $Y$:
		Denote the LDS of the sequences $B_1, B_2, \dots, B_{L - 1}$ as $\ell_1, \dots , \ell_{L-1}$, respectively. If any $\ell_i \geq L - 1$, the sequence $B_i \cup \{1\}$ ($B_i$ followed by 1) will have an LDS greater or equal to $L$, satisfying the claim of the theorem. Similarly, if any $\ell_i = 0$, we have $C_i = D_L$, and the LDS is at least $L$. We can, therefore, assume that $1 \leq \ell_i \leq L - 2$ for all $i \in [L-1]$. By the pigeonhole principle, there exists a pair $i, j$ such that $i < j$ and $\ell_i = \ell_j$. Notice that, by the construction of $A_{L + 1}$, all the elements of $B_j$ are greater than all the elements of $C_i$. The element $1$ is dequeued after all elements of $B_j$ and before the last element of $Y$, and, hence, it is dequeued before $C_i$. Since $\ell_i=\ell_j$, the LDS of the concatenation of $B_j$ and $C_i$ is equal to $L$, which proves the claim.
	\end{itemize}
\end{claimproof}

How large is $A_L$ as a function of $L$?  It is composed of $A_{L - 1}$, $L - 2$ sequences $D_{L-1}$ of length $L - 1$, and finally the element $1$. Therefore,  $size(A_{L}) = size(A_{L - 1}) + (L - 2) \cdot (L - 1) + 1$. Together with $size(A_{3}) = 3$, we obtain $size(A_L) = \frac{L(L^2 - 3L + 5)}{3} - 2 =\Theta(L^3)$.

\subsubsection{Lower Bound for \texorpdfstring{$k$}{k} Queues} \label{lowerk}
In this part, we present a permutation $\pi$ with  LDS $L$ such that $\Lambda_k(\pi) \geq L - k + 1$. We first describe the sequence, then we prove some preparatory lemma, and finally, we establish the theorem. 

\begin{definition}
	For $k \ge 1$ and $L \ge 1$, define permutation $B_L^k$ by the following rules:
	\begin{itemize}
		\item For any $L$, set $B_L^1$ to be a decreasing sequence of length $L$.
		\item For each $1 \leq L \leq k + 1$, set $B_L^k$ to be a decreasing sequence of length $L$.
		\item For every $L \geq k + 1$, and $k \geq 2$, let
		$$
		B_{L + 1}^k = (B_L^k \oplus \underbrace{B_L^{k - 1} \oplus \dots \oplus B_L^{k - 1}}_{L - 1 \text{ times}}) \ominus [1].
		$$
	\end{itemize}
\end{definition}

Observe that, for $k=2$, the definition above is the same as in the previous subsection. Similarly to the case for $k=2$, one can easily show that the LDS of $B_L^k$ is exactly $L$. So, it remains to be proven that $\Lambda_k(B_L^k) \geq L - k + 1$ for all $L \geq k + 1$. Note that for $L \leq k$, it is possible to yield a sorted sequence as the result. It is also easy to see that for $L = k+1$, the lowest obtainable LDS is $2 = L - k + 1$, because, by the pigeonhole principle, at least two elements will be enqueued on the same queue.

Now we state and prove some lemmas that will be useful later.

\begin{definition}
	Sequences $A'$ and $A''$ are called a \emph{decomposition} of $A$ if $A'$ and $A''$ are disjoint subsequences of $A$ and $A$ is an interleaving of $A'$ and $A''$.
\end{definition}

\begin{lemma}
	\label{two_seqs}
	For any decomposition $A$, $B$ of some sequence $C$, if $\Lambda_s(A) = \ell$ and $\Lambda_{s'}(B) = t$, then $\Lambda_{s + s' - 1}(C) \leq \ell + t$.
\end{lemma}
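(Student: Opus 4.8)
The plan is to combine the two given strategies into one strategy that uses $s + s' - 1$ queues. Start with a splitting $C = $ interleaving of $A$ and $B$. Take the strategy witnessing $\Lambda_s(A) = \ell$: it reads $A$ element by element, distributes the elements among $s$ queues, and dequeues them to form an output of LDS $\ell$; similarly take the strategy witnessing $\Lambda_{s'}(B) = t$ using $s'$ queues with output LDS $t$. The idea is to run these two strategies ``in parallel'' on the single sequence $C$: when an element of $C$ originally belonging to $A$ arrives, feed it to the $A$-strategy (using the first block of queues); when an element belonging to $B$ arrives, feed it to the $B$-strategy (using the second block of queues). Naively this uses $s + s'$ queues, so the whole point is to save one queue by \emph{sharing} a single queue between the two strategies.

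First I would observe that we are free to shift all of $A$'s values so that they are entirely smaller than all of $B$'s values when convenient; but in the dequeue phase we cannot reorder across the two strategies freely, so the cleaner route is: after running the $A$-strategy to completion we obtain an output sequence $\alpha$ of LDS $\ell$, and after running the $B$-strategy we obtain $\beta$ of LDS $t$; the final output is the concatenation $\alpha \beta$ (or $\beta\alpha$), whose LDS is at most $\ell + t$. The subtlety is that the two strategies must run on the interleaved input $C$, not on $A$ and $B$ separately, but since each strategy only ever touches its own elements and each can buffer its own queues, interleaving the inputs causes no problem — the queues of the $A$-part and the $B$-part never interact during enqueueing. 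So the only real task is the queue count. To reduce $s+s'$ to $s+s'-1$, merge the ``last'' queue of the $A$-strategy with the ``first'' queue of the $B$-strategy into a single physical queue: since we will dequeue all of $A$'s output before any of $B$'s output, and within the shared queue all $A$-elements were enqueued before all $B$-elements (the $A$-strategy finishes enqueueing before we start the $B$-strategy — or more carefully, we can delay enqueueing $B$'s elements destined for the shared queue until the $A$-phase's use of that queue is done), the FIFO discipline of that one queue reproduces exactly the behaviour of the two separate queues. Hence $s + s' - 1$ queues suffice and the output has LDS $\le \ell + t$, giving $\Lambda_{s+s'-1}(C) \le \ell + t$.

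The main obstacle I anticipate is making the ``share one queue'' argument fully rigorous in the online enqueue model: one has to argue that the element stream of $C$ can be processed so that the single shared queue receives all its $A$-designated elements (in the order the $A$-strategy wants) strictly before any of its $B$-designated elements, and that delaying the $B$-strategy's actions on that queue does not affect correctness of the $B$-strategy (it doesn't, because a queue's contents and output order depend only on the relative order of its own pushes and pops, not on when they happen in real time). Once that bookkeeping is set up, the LDS bound is immediate from Observation~\ref{skewdirect}-style reasoning: a decreasing subsequence of $\alpha\beta$ splits into a decreasing subsequence of $\alpha$ (length $\le \ell$) followed by one of $\beta$ (length $\le t$). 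I would therefore devote most of the write-up to carefully describing the combined strategy and the queue-merging step, and only a sentence to the final LDS estimate.
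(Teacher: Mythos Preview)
Your high-level plan---run the two strategies in parallel on disjoint blocks of queues and share one queue to bring the count down to $s+s'-1$---is the right shape, and your final LDS estimate is fine. But the specific queue-sharing mechanism you propose does not work in this model, and the fix you sketch in the last paragraph does not close the gap.

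The problem is the sentence ``we can delay enqueueing $B$'s elements destined for the shared queue until the $A$-phase's use of that queue is done.'' In the queue model there is no place to delay an element: when an element of $C$ arrives it must be pushed into \emph{some} queue immediately. Since $C$ is an arbitrary interleaving of $A$ and $B$, a $B$-element headed for the shared queue may arrive while later $A$-elements destined for that same queue are still upstream. You cannot park it elsewhere without altering the $B$-strategy (or the $A$-strategy), and you cannot put it in the shared queue without sandwiching it between $A$-elements, which destroys the ``all of $\alpha$ before all of $\beta$'' structure you need for the concatenation argument. So as written, the combined protocol is not well-defined on general inputs.

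What the paper does instead is identify a particular queue in each strategy that \emph{can} be shared without any ordering conflict: an \emph{express queue}. The observation (proved just before the lemma) is that if one runs a protocol in ``eager-dequeue'' mode---dequeue whenever the dequeue protocol can make progress, otherwise enqueue the next input element---then at every moment some queue is either empty or holds a single element that is about to be popped; moreover the protocol can be rearranged so that this is always the \emph{same} queue. That queue is essentially a pass-through: nothing ever sits in it. Sharing the express queues of the two strategies therefore causes no interference, because whichever strategy pushes into it pops immediately. The resulting output is an \emph{interleaving} of $A'$ and $B'$ (not a concatenation), but a decreasing subsequence of any interleaving still splits into decreasing subsequences of $A'$ and of $B'$, so its length is at most $\ell+t$.

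In short: the missing idea is the express-queue lemma. Once you have it, your argument goes through with the shared queue taken to be the two express queues, and with ``concatenation'' replaced by ``interleaving'' in the LDS bound.
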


Before proceeding with the proof, we make some remarks and claims so that the proof becomes more intuitive and easier to verify. Consider a sequence $D$ being completely enqueued and completely dequeued using $k$ queues and resulting in a sequence $D'$. Let us denote the enqueuing protocol, that is, the policy used to enqueue $D$, $S_e$ and the dequeuing protocol $S_d$.
We can encode $S_e$ into a sequence, where the $i$-th element is the index of the queue into which the $i$-th element of $D$ is enqueued. For example, the string $(1, 3, 2)$ encodes a protocol that enqueues the first element into $Q_1$,
the second element into $Q_3$, 
and the third element into $Q_2$. 
We can encode $S_d$ in the same way, but this time, each element indicates the index of the queue from which $i$-th element of the output is dequeued.

\begin{claim}
	\label{same_prot}
	The following two ways to execute the protocols are equivalent and result in the same sequence $D'$.
	\begin{itemize}
		\item First, enqueue all the elements of $D$ according to $S_e$, and, afterwards, dequeue them according to $S_d$.
		\item If the next element in $S_d$ is available in the respective queue, dequeue it; otherwise, enqueue the next element according to $S_e$. Repeat until the starting sequence and all the queues are empty.
	\end{itemize}
\end{claim}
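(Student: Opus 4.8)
The plan is to exploit the fact that a FIFO queue is insensitive to how its pushes and pops are interleaved, as long as no pop is ever attempted on an empty queue. First I would fix notation. Write $n = |D|$. For a queue index $t$, let $P_t$ be the subsequence of $D$ formed by those entries $D_i$ with $S_e[i] = t$, kept in their original order; this is exactly the sequence of elements pushed into $Q_t$, and it is identical in both executions because $S_e$ is fixed in advance. For an output position $j$ let $t_j = S_d[j]$ and $c_t(j) = |\{\, j' \le j : S_d[j'] = t \,\}|$ (with $c_t(0)=0$), so that in either execution the element written to output slot $j$ is precisely the $c_{t_j}(j)$-th element popped from $Q_{t_j}$ over the whole run. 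Since the first execution is assumed to be well defined, the pair $(S_e,S_d)$ is \emph{consistent}: for every $t$ the total number of pops from $Q_t$ equals $|P_t|$, i.e.\ $c_t(n) = |P_t|$.

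Next I would record the elementary invariant for a single FIFO queue: if a queue undergoes any interleaving of the fixed push-sequence $P_t$ with some pops and no pop is ever performed while the queue is empty, then after $a$ pushes and $b \le a$ pops the queue holds exactly the contiguous block $P_t[b{+}1], \dots, P_t[a]$, and hence the $m$-th pop returns $P_t[m]$. This follows by a one-line induction on the number of operations, since a push appends the next element of $P_t$ and a valid pop removes the current front. Applying this to the first execution — where every push precedes every pop, so underflow is impossible by consistency — shows that its output in slot $j$ equals $P_{t_j}[\,c_{t_j}(j)\,]$.

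The crux is to show that the second (interleaved, greedy) execution also never pops from an empty queue and terminates; granting that, the same invariant forces its output in slot $j$ to be $P_{t_j}[\,c_{t_j}(j)\,]$ as well, so the two outputs coincide and both equal $D'$. Termination is immediate: every iteration performs either one push (advancing the pointer into $S_e$) or one pop (advancing the pointer into $S_d$), and $|S_e| = |S_d| = n$, so the process halts in at most $2n$ iterations; consistency then guarantees all queues are empty at the end. For the no-underflow property, suppose the second execution blocks: the pending output slot $j$ demands a pop from $Q_{t_j}$, the queue $Q_{t_j}$ is empty, and all of $D$ has already been pushed (otherwise the rule would push the next element rather than block). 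But then $Q_{t_j}$ has received all $|P_{t_j}|$ of its elements while only $c_{t_j}(j-1) = c_{t_j}(j) - 1 \le c_{t_j}(n) - 1 = |P_{t_j}| - 1$ of them have been popped, so $Q_{t_j}$ still contains at least one element — contradicting that it is empty. Hence no block occurs and the second execution runs to completion, producing the same sequence $D'$. The only delicate point is this blocking analysis, and it is precisely the consistency of $(S_e,S_d)$ — guaranteed because the first execution is assumed to work — that makes it close; everything else is bookkeeping.
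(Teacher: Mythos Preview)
Your proof is correct and rests on the same idea as the paper's: with fixed $S_e$ and $S_d$, the $m$-th pop from queue $t$ is always the $m$-th element ever pushed to $t$, regardless of how pushes and pops are interleaved. The paper argues this by contradiction at the first differing output position; you do it by direct computation via your $P_t[c_{t_j}(j)]$ formula. The one genuine addition in your write-up is the explicit no-underflow and termination analysis for the greedy execution, which the paper's two-line proof takes for granted; your consistency argument ($c_t(n)=|P_t|$) closes that gap cleanly and makes the claim airtight.
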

\begin{claimproof}
	The first case coincides with the construction of $S_e$ and $S_d$. Therefore, it trivially leads to $D'$.
	
	Assume the resulting sequence $D''$ of the second case is different and that the first position where it differs is $i$. It means that the $i$-th element that gets dequeued is different in $D'$ and $D''$. Since we use the same protocol $S_d$ in both cases, the queue from which this element is dequeued is the same. Since we also execute the same enqueuing protocol $S_e$ as well, the relative ordering of the elements in any queue is preserved and cannot result in different elements at the same position.
\end{claimproof}
\begin{claim}
	\label{one_empty}
	In the second case of Claim~\ref{same_prot}, at each step, there is always at least one queue that is either empty or contains a single element that is, moreover, dequeued in the next step.
\end{claim}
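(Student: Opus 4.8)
The plan is to prove the slightly stronger invariant that \emph{every} configuration reached during the interleaved execution described in the second case of Claim~\ref{same_prot} either has an empty queue, or is what I will call a \emph{transient} configuration: it was just reached by enqueuing some element $x$ into a previously empty queue $Q$, and the very next operation of the process dequeues $x$ from $Q$. In a transient configuration $Q$ holds exactly the single element $x$ and $x$ is dequeued as soon as it is enqueued, which is precisely the second alternative of the claim, so the invariant implies the claim. The initial and final configurations are all-empty and hence trivially fine, so it suffices to show the invariant is preserved by a single operation.

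First I would record two elementary facts. (a) If the next operation is an enqueue, then the queue from which the next output element must be taken — write it $Q_{S_d[j]}$, where $j$ is the index of the next output position — is empty at that moment; this is exactly why the rule chose to enqueue rather than to dequeue. Hence that configuration has an empty queue. (b) By Claim~\ref{same_prot} the interleaved execution never gets stuck and produces the correct output sequence; consequently, whenever $Q_{S_d[j]}$ is non-empty at the moment we are about to emit output position $j$, its front element is exactly the element designated for position $j$, so the process will dequeue it on the next operation.

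Now the inductive step. Suppose a configuration $\mathcal C$ satisfies the invariant and one operation produces $\mathcal C^{+}$. If $\mathcal C$ is transient, the operation dequeues its transient element and empties that queue, so $\mathcal C^{+}$ has an empty queue. Otherwise $\mathcal C$ has an empty queue $Q_a$. If the operation is a dequeue, it acts on a non-empty queue, so not on $Q_a$, and $Q_a$ is still empty in $\mathcal C^{+}$. If the operation is an enqueue into a queue $Q_m$ and $\mathcal C$ had an empty queue other than $Q_m$, that queue is still empty in $\mathcal C^{+}$. The only remaining case is that $Q_m$ was the \emph{unique} empty queue of $\mathcal C$; then $Q_m$ now contains exactly the newly enqueued element $x$, and since the enqueue was chosen because $Q_{S_d[j]}$ was empty and $Q_m$ was the sole empty queue, we have $S_d[j]=m$. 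By fact (b), the front $x$ of $Q_m$ is the correct element for position $j$, so the next operation dequeues $x$ from $Q_m$; that is, $\mathcal C^{+}$ is transient. This exhausts the cases, and the invariant propagates.

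The one genuinely delicate point, and where I would spend the most care, is fact (b) together with the last case above: justifying that when a just-refilled queue becomes the target of the next dequeue, its single element really is the one demanded by $S_d$. I would either lean directly on Claim~\ref{same_prot} (the interleaved and the enqueue-all-then-dequeue executions coincide, so the interleaved dequeues are never blocked by a ``wrong'' front element), or give the direct FIFO-counting argument: elements leave $Q_m$ in their enqueue order and are mapped, in that order, onto the increasing list of output positions $p$ with $S_d[p]=m$; since outputs are emitted in increasing index order and $Q_m$ was empty immediately before $x$ arrived, $x$ is mapped to the smallest such $p$ not yet emitted, which is exactly the current $j$. Everything else is routine bookkeeping over the case analysis.
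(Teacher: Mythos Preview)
Your proof is correct and takes a more careful route than the paper. The paper argues by a one-paragraph contradiction: if every queue held an element that is not immediately dequeued, then all heads would be stuck, further enqueues could not change those heads, and the dequeuing protocol would stall, contradicting the termination guaranteed by Claim~\ref{same_prot}. You instead maintain an explicit inductive invariant (``empty or transient'') and walk it through each operation. The case you isolate --- an enqueue into the \emph{unique} empty queue --- is precisely where the paper's sketch is thin: the paper does not spell out why the single element now sitting in that queue must be the one demanded next by $S_d$, i.e., why ``single element'' and ``dequeued as soon as enqueued'' go together. Your fact~(a) combined with $S_d[j]=m$ nails this down. What your approach buys is rigor and a clean handoff to Claim~\ref{express_queue}, since your transient property is exactly what the express-queue construction exploits; what the paper's approach buys is brevity.

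One minor remark: for the last case you do not actually need the ``front element is the correct element'' part of fact~(b); once you know $S_d[j]=m$ and $Q_m$ is now non-empty, the rule alone forces the next operation to be a dequeue from $Q_m$, and since $Q_m$ contains only $x$, the dequeued element is $x$. The FIFO-counting justification you sketch is fine but unnecessary here.
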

\begin{claimproof}
 If after a time step $t$ all queues are non-empty, the element that was enqueued at time $t$ will be dequeued at time $t+1$.
\end{claimproof}

Notice that the queue in Claim~\ref{one_empty} may change over time. However, the protocol can be modified to achieve this. We follow the terminology of \cite{expressqueue1973} and denote this queue as \emph{express queue}. So, the express queue has the property that it never contains more than one element, and an element enqueued into it is immediately dequeued in the next time step. 

\begin{claim}
	\label{express_queue}
	One can modify the protocols $S_e$ and $S_d$ such that the result is the same and such that there is a fixed queue that is always either empty or contains a single element that is immediately dequeued. 
\end{claim}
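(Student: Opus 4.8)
The plan is to prove the claim by a \emph{time-varying relabeling} of the $k$ queues. By Claim~\ref{same_prot} we may work with the interleaved execution of $(S_e,S_d)$, and by Claim~\ref{one_empty} at every step at least one queue is \emph{express}, meaning it is empty or holds a single element dequeued by the immediately following event. Fix the last physical queue $P_k$ as the intended express queue. As the interleaved execution proceeds I would maintain a bijection $\rho$ from the queue indices $\{1,\dots,k\}$ used by $(S_e,S_d)$ to the physical indices, updated online, obeying the invariant that the old queue $\rho^{-1}(k)$ is currently express. The new protocols $S'_e,S'_d$ are then read off event by event: an enqueue into old queue $i$ becomes an enqueue into $\rho(i)$, and a dequeue from old queue $j$ becomes a dequeue from $\rho(j)$, where $\rho$ is the relabeling in force at that event; since the interleaved execution totally orders all events, this is well defined.

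The key point is that while $\rho$ is constant the relabeled execution is literally the original one with renamed queues -- the same order of processing $D$, the same output $D'$ -- so it suffices to show that the updates of $\rho$ are harmless. The invariant can only be endangered when $a:=\rho^{-1}(k)$ stops being express. An express queue loses that status only by receiving an element that is not dequeued next: from the empty state via such an enqueue (a dequeue cannot act on it), and never from the single-element state, whose only possible next event is the dequeue of that element. At the step right before such an enqueue the upcoming event is an enqueue rather than a dequeue, so no queue can be in the ``single element dequeued next'' state; hence every express queue at that step, $a$ in particular, is genuinely \emph{empty}. By Claim~\ref{one_empty}, after the enqueue some queue $b\ne a$ is still express, and, being untouched by the enqueue, $b$ was already empty. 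I then update $\rho$ by swapping the images of $a$ and $b$. Because the two physical queues affected, $\rho(a)=k$ and $\rho(b)$, are both empty at that instant, exchanging their names alters neither their contents nor the outcome of any future dequeue, so the produced sequence is unchanged; and afterwards $\rho^{-1}(k)=b$ is express, restoring the invariant. No other event threatens it: a dequeue can only empty $a$ further or empty some other queue (creating more express queues), and an enqueue into $a$ whose element is dequeued next keeps $a$ express.

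It remains to assemble the pieces. The interleaved execution of $(S'_e,S'_d)$ is by construction the relabeled original execution, which outputs $D'$; by Claim~\ref{same_prot} applied to $(S'_e,S'_d)$ its two-phase execution outputs $D'$ as well; and throughout the run the physical queue $P_k$ is precisely the old queue $\rho^{-1}(k)$ under relabeling, which the invariant keeps express. Hence $P_k$ is, at every step, empty or holding a single element dequeued immediately, as claimed.

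The step I expect to be the main obstacle is the case analysis pinning down \emph{when} the distinguished express queue can be forced to give up its role: everything hinges on the observation that this can happen only through an enqueue into an empty queue, so that the relabeling swap is between two empty physical queues and therefore invisible to the output. Making this airtight requires the strict reading of ``dequeued as soon as it is enqueued'' (the dequeue is the immediately following event), which is exactly what forces every express queue to be truly empty whenever the next event is an enqueue.
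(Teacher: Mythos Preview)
Your argument is correct and, in fact, more carefully worked out than the paper's own proof. Both proofs hinge on the same observation: the designated express queue can lose its role only at a moment when at least two queues are simultaneously empty, so one can ``redirect'' things without altering the output. The paper expresses this operationally --- whenever several queues are empty, reorder the protocol so that the desired express queue is the last to receive an element --- but it does not spell out why this modification leaves the dequeued sequence $D'$ unchanged. You formalise the same idea through a time-varying bijection $\rho$ and make the invariance explicit: the only updates to $\rho$ are swaps between two physical queues that are both empty at that instant, hence invisible to every future dequeue. Your case analysis (an express queue can only cease to be express via an enqueue from the empty state, and the next event after that enqueue is again an enqueue, forcing the replacement queue $b$ to be genuinely empty as well) is exactly what is needed to justify the paper's implicit claim and is the main content that the paper's one-paragraph sketch leaves to the reader. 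The relabeling formulation also makes clear that $S'_e,S'_d$ are honest fixed protocols (each entry is $\rho$ at the time of that event applied to the corresponding entry of $S_e,S_d$), so Claim~\ref{same_prot} applies to them directly.
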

	\begin{claimproof}
		The only way the queue of Claim~\ref{one_empty} can change is that there is more than one empty queue that blocks the dequeuing protocol from being executed; for example, assume $Q_1$ is currently the queue of Claim~\ref{one_empty}. Then, $Q_2$ gets emptied, and the dequeuing protocol requires an element from $Q_2$. If the enqueuing protocol enqueues elements into $Q_1$, the dequeuing protocol is still blocked by $Q_2$, which becomes the new queue of Claim~\ref{one_empty}. In order to fix an {express queue}, the enqueuing and dequeuing protocol must be modified such that whenever there is more than one empty queue, the last queue to receive an element from the enqueuing protocol is the desired {express queue}. By doing so, the fixed queue cannot be blocked by other empty queues, because it is the last one to stay empty.
	\end{claimproof}
We proceed with the proof of Lemma~\ref{two_seqs}.
\begin{proof}
	After enqueuing and dequeuing $A$ using $s$ queues, we obtain an output sequence $A'$ with a LDS of $\ell$. Denote the protocol to get this result as $S'$. Similarly, denote the protocol for $B$ that results in $B'$ with a LDS of $t$, as $S''$. Now modify $S'$ and $S''$ according to Claim~\ref{express_queue} such that they both have an {express queue}. W.l.o.g. we can assume that the {express queue} is the first queue in both cases. Now, we show how to enqueue-dequeue $C$ using $s + s' - 1$ queues such that the resulting sequence has a LDS of $\ell + t$.
    
    Enqueue the $i$-th element $e$ of $C$ as follows:
	\begin{itemize}
		\item If $e$ is an element of $A$, enqueue it according to $S'_e$.
		\item If $e$ is an element of $B$, let $q \in [s']$ be the index of the queue, where it is supposed to be enqueued according to $S''_e$:
		\begin{itemize}
			\item If $q = 1$, i.e., the {express queue}, enqueue $e$ in the first queue.
            \item Otherwise, enqueue $e$ in the queue with index $q + s - 1$, and modify the dequeuing protocol $S''_d$ accordingly.
		\end{itemize}
	\end{itemize}
	We can now dequeue the elements of $C$ according to $S'_d$ and $S''_d$, where the origin of the head of the first queue determines which of the two is currently active. We thereby obtain a sequence $C'$ with an LDS that is at most the sum of the LDS of $A'$ and $B'$, i.e., $\ell + t$.
\end{proof}

\begin{lemma}
	\label{ldssubseqs_4plus}
	For any decomposition $A'$, $A''$ of some sequence $A$, if $\Lambda_k(A) = L$ and $\Lambda_{s}(A') = \ell$, with $s < k$, then $\Lambda_{k - s + 1}(A'') \geq L - \ell$.
\end{lemma}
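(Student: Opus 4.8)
The plan is to obtain this as an immediate consequence of Lemma~\ref{two_seqs}, used in contrapositive form. The key observation is that $A'$ and $A''$ are by hypothesis a split of $A$, so Lemma~\ref{two_seqs} applies to the pair $(A', A'')$ with the roles $s$ and $s' := k - s + 1$. Note that the arithmetic on queue counts works out exactly: $s + s' - 1 = s + (k - s + 1) - 1 = k$, which is why the hypothesis $s < k$ is needed (so that $s' = k-s+1 \ge 2 \ge 1$ and the decomposition is meaningful).

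I would argue as follows. Suppose for contradiction that $\Lambda_{k-s+1}(A'') = t$ with $t < L - \ell$. Since $A', A''$ is a split of $A$, and $\Lambda_s(A') = \ell$, Lemma~\ref{two_seqs} (with $s' = k-s+1$) yields
$$
\Lambda_k(A) = \Lambda_{s + (k-s+1) - 1}(A) \leq \ell + t < \ell + (L - \ell) = L.
$$
This contradicts the hypothesis $\Lambda_k(A) = L$. Hence no such $t$ exists, i.e., $\Lambda_{k-s+1}(A'') \geq L - \ell$, as claimed.

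The only points worth spelling out carefully in the write-up are (i) that "$\Lambda_s(A') = \ell$" in particular gives $\Lambda_s(A') \le \ell$, which is all Lemma~\ref{two_seqs} consumes on the $A'$ side, and (ii) that the value $\Lambda_{k-s+1}(A'')$ is well-defined and can be named $t$ so that Lemma~\ref{two_seqs} can be invoked with $A''$ realizing an output of LDS $t$ through $k-s+1$ queues. There is no real obstacle here; the lemma is essentially a restatement of the subadditivity established in Lemma~\ref{two_seqs}, and the proof is a two-line contradiction argument once the index bookkeeping $s + (k-s+1) - 1 = k$ is checked.
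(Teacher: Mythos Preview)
Your proposal is correct and matches the paper's own proof essentially verbatim: both set $t := \Lambda_{k-s+1}(A'')$, assume $t < L-\ell$, apply Lemma~\ref{two_seqs} with $s' = k-s+1$ so that $s+s'-1=k$, and derive $\Lambda_k(A) \le \ell+t < L$, contradicting $\Lambda_k(A)=L$.
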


\begin{proof}For sake of contradiction, assume $t := \Lambda_{k - s + 1}(A'') < L - \ell$. By Lemma~\ref{two_seqs}, we can enqueue-dequeue $A$ with $k = s  +  k - s + 1 - 1$ queues such that the LDS of the result is at most $\ell + t < \ell + L - \ell = L$, a contradiction to $\Lambda_k(A) = L$. Therefore, $t \geq L - \ell$.
\end{proof}

We will apply the preceding Lemma to the sequence $B_L^{k-1}$. We know, by induction hypothesis, that with $k - 1$ queues, we cannot enqueue-dequeue it into a sequence with LDS lower than $L - k + 2$. Formally, $\Lambda_{k - 1}(B_L^{k-1}) = L - k + 2$. The Lemma gives us information on what happens if we have a total of $k$ queues. If we decompose $B_L^{k-1}$ into two sequences, and we enqueue-dequeue the first one using $s$ queues and the second one using $k-s$ queues, the Lemma tells us that the sum of the resulting LDSs will still be at least $L - k + 2$.

We now come to the proof of Theorem~\ref{intro:k-queue-thm}. We show that for 
$k \geq 1$ and $L \geq k + 1$, there exists a sequence $A$ of LDS $L$ such that $\Lambda_k(A) \geq L-k+1$. The length of $A$ is $\Theta(L^{2k-1})$.

\newcommand{\cancelblue}[1]{\textcolor{blue}{\sout{#1}}}

The proof is similar to the case $k=2$ presented above. Just like for $k=2$, we identify two possible scenarios on when the element $1$ is dequeued. Finally, we show that in both cases the claim holds. On a conceptual level, this more general proof strictly follows the simpler case.
\begin{claimproof}[Proof of Theorem~\ref{intro:k-queue-thm} for general $k$]
	We use double induction on $L$ and $k$.
	First, we set the base case for $k = 1$, and we prove the claim for every $L$. Recall that, in this case, $B_L^1$ is a decreasing sequence of length $L$. Obviously, with a single queue, the sequence does not change, and the resulting sequence has an LDS of $L = L - k + 1$.
 
	Now, we set the base case for $L = k + 1$ and prove the claim for every $k$. We have already argued that, by the pigeonhole principle, the lowest possible obtainable LDS for $B_{k + 1}^k$ with $k$ queues is $2 = L - k + 1$.
 
	Now, we proceed with the induction step and assume $k \geq 2$ and $L \geq k + 1$. We want to prove $\Lambda_k(B_{L + 1}^k) \geq L - k + 2$. By induction hypothesis:
	\begin{itemize}
		\item $\Lambda_k(B_L^k) \ge L - k + 1$.
        \item $\Lambda_{k - 1}(B_{L}^{k - 1}) \ge L - (k - 1) + 1 = L - k + 2$.
	\end{itemize}
    Let $P_i$ denote the $i^{th}$ copy of $B_L^{k - 1}$ and $P_i^j$ the elements of $P_i$ that are enqueued to the $j^{th}$ queue with $1 \leq i \leq L - 1$ and $1 \leq j \leq k$.
 Similarly, let $P_0^{j}$ denote the elements of $B_L^k$ that are enqueued into the $j$-th queue. In the proof for $k = 2$, we used $C_i$ for what is now called $P_i^2$ and $X$ for what is now called $P_0^1$.
 W.l.o.g.\ assume that the element $1$ is enqueued into the first queue.
	Before proceeding, we prove another intermediate claim.
	
	\begin{claim}
		\label{cases_4}
		At least one of the following is true:
		\begin{enumerate}
			\item All the sequences $P_0^{j}$ are dequeued before the element $1$;
            \item There is a set $I$ of at least $L - k + 1$ indices $i_1, i_2, ..., i_{L-k+1}$ such that for each $i \in I$: For all $j \in [k]$, $P_i^{j}$ is either completely dequeued before $1$ or completely dequeued after $1$. Moreover, for each $i \in I$, there exist $j', j'' \in [k]$ such that $P_i^{j'}$ is completely dequeued before $1$, and $P_i^{j''}$ is completely dequeued after $1$. 
		\end{enumerate}
	\end{claim}
	\begin{claimproof}
        Assume we are not in case 1, i.e., $1$ is dequeued before the last element of some $P_0^p$. Note that $p > 1$, since $1$ is dequeued after all elements in the first queue. Also, $P_i^1$ is dequeued before $1$ for all $i$, since $1$ is dequeued after all elements in the first queue, and all $P_i^p$ are dequeued after $1$, for $i \geq 1$.
        
        Say that a queue $j$ kills a copy $P_i$ if $1$ is dequeued after the first element of $P_i^j$ and before the last of $P_i^j$. Then, queues $1$ and $p$ kill no copy, and any other queue kills at most one copy. So $L-k+1$ copies are not killed, as we have $L - 1$ copies but only $k - 2$ queues that can kill a copy.
	\end{claimproof}
	
	We now consider the two cases of Claim~\ref{cases_4} separately:
	\begin{enumerate}
		\item Recall that the sequences $P_0^{j}$ are the result of enqueuing the elements of $B_L^k$. By the inductive hypothesis, $\Lambda_k(B_L^k) \ge L - k + 1$. By assumption, $1$ is output after all elements of $B_L^k$, and, hence, an LDS of at least $L - k + 2$ will result.

         \item By assumption, there is a set $I$ indices $i_1, i_2, ..., i_{L-k+1}$ such that $1$ is either dequeued before or after all items of each $P_i^j$, for $i \in I$. Also $1$ is dequeued after all elements in $P_i^1$, $i \in I$, and before all elements of $P_i^p$, $i \in I$, for some $p > 1$.

        Therefore, for each $i \in I$, there is a set $S_i$ such that
        \begin{enumerate}
            \item $S_i \subset [k]$ and $\emptyset \neq S_i \neq [k]$, and
            \item all $P_i^j$ with $j \in S_i$ are dequeued before $1$, and all $P_i^j$ with $j \not\in S_i$ are dequeued after $1$. Note that $1 \in S_i$, and $p \not\in S_i$.
        \end{enumerate}

		\newcommand{\ensemble}[1]{P_{#1}^{j \in [s]}}
		\newcommand{\newensemble}[1]{P_{#1}^{j \in [s_{#1}]}}
		\newcommand{\newnewensemble}[1]{P_{#1}^{S_{#1}}}
		
		From now on, we will use $i$ to refer to any index in $I$. If some $P^j_i$ is empty for some $j$, $P_i$ is enqueued into fewer than $k$ queues, and, hence, the result has an LDS of at least $L - k + 2$ by induction hypothesis.
		
		We use $\newnewensemble{i}$ to denote the collection of $P_i^j$, $j \in S_i$. Let $\ell_i$ be the minimum LDS that any dequeuing of $\newnewensemble{i}$ can achieve.\\ Since $P_i^j$ is non-empty for all $j$ and $S_i \neq \emptyset$, $\ell_i \ge 1$. 
		
		Assume next that  there is an $i$ such that $\ell_i \ge L - k +1$. Then, together with the element 1, we obtain a decreasing subsequence of length $L - k + 2$. 
		
		So, we are left with the case that $1 \le \ell_i \le L - k$ for all $i$. Therefore, there must be indices $u$ and $v$ such that $1 \le u < v \le L - k + 1$ and $\ell_u = \ell_v$. Note that all elements in $P_u$ are greater than all elements in $P_v$. Note also that we are not claiming that $P_u$ and $P_v$ are enqueued in the same way. We are only claiming that both enqueuings give the same LDS. 
		
        We next apply Lemma~\ref{ldssubseqs_4plus} to $A = P_u$, where $A'$ is the subsequence of $P_u$ that is enqueued into the queues in $S_u$ and $A''$ the subsequence that is enqueued into the other queues. Since $\Lambda_{k - 1}(B_L^{k - 1}) = L - k + 2$ and $\Lambda_{|S_u|}(A') = \ell_u$, we have $\Lambda_{k - |S_u|}(A'') = \Lambda_{k - 1 - |S_u| + 1}(A'') \ge L - k + 2 - \ell_u$. In the same way, we can apply the lemma to $B = P_v$, where $B'$ is the subsequence of $P_v$ that is enqueued into the queues in $S_v$, and $B''$ the subsequence that is enqueued into the other queues. We get that $\Lambda_{k - |S_v|}(B'') \ge L - k + 2 - \ell_v$.
		
        Now we are done. In the output, we have a decreasing subsequence of length $\ell_v$ resulting from the dequeuing of $\newnewensemble{v}$, followed by a decreasing sequence of length $L - k + 2 - \ell_u = L - k + 2 - \ell_v$ resulting from the dequeuing of the $P_u^j$, $j \not\in S_u$.
  Note that all elements in the former sequence are larger than all elements in the latter sequence. Hence, an LDS of $L - k + 2$ results.
  \end{enumerate}
 
 This completes the proof.
\end{claimproof}

\subsubsection{Growth of the Sequence}
We show that the length of the sequence $B_L^k$ is $\Theta(L^{2 k - 1})$. Denote by $S_L^k$ the length of $B_L^k$. For $k=2$, we have already seen that $S_L^2 = \frac{L(L^2 - 3L + 5)}{3} - 2 \in \Theta(L^3)$. For $k \ge 3$, we have 
\[
S_L^k = S_{L-1}^k + (L - 2) S_{L-1}^{k-1} + 1.
\]
By expanding the recursive formula on the first term, we obtain
\[
S_L^k = L + \sum_{i=k + 1}^{L-1} (i - 1) S_{i}^{k - 1},
\]
where we used $S_{k + 1}^k = k + 1$.

It is now easy to complete the proof by induction.
The base case $S_L^1$ is trivial, as $S_L^1 = L$. We now assume that $S_L^{k - 1} \in \Theta(L^{2k - 3})$. The term that will result in the highest degree is $\sum_{i=k + 1}^{L-1} (i - 1) S_{i}^{k - 1}$. In particular, $(i - 1) S_{i}^{k - 1} \in \Theta(i^{2k - 2})$, by the inductive hypothesis. Since we sum the indices $i$ up to $L$, we get a polynomial one degree higher. Hence, $S_L^k \in \Theta(L^{2k - 1})$.

\section{Mergeing Two Sequences of LDS Two}\label{Mergability}

Merging two (or more) sorted sequences is central to some efficient sorting algorithms such as Mergesort~\cite{cormen2022algorithms}.
The LDS of a sorted list is one, and it is always possible to merge two (or more) queues of LDS one into an output sequence of LDS one. We ask: How well can one merge sequences of LDS greater than one. We initiate a study of this question and give an answer for two sequences of LDS two. They can always be merged into a sequence of LDS three and sometimes into a sequence of LDS two. We give a linear time algorithm that decides whether the latter is possible. 

\subsection{Merging Two Sequences of LDS Two in Polynomial Time}

We give a proof of Theorem~\ref{intro:mergeability}, however without establishing linear running time. Linear running time will be established in the next section. We will define a precedence graph on the elements of the two sequences. There will be two kinds of edges: strict edges and tentative edges. The tentative edges come in pairs sharing an element. We show that an output sequeence has LDS at most two iff it obeys all strict edges and one edge in each pair of tentative edges. We then derive an algorithm to decide whether such a selection of tentative edges is possible.

We have two sequences $Q_1$ and $Q_2$, each of LDS at most two. We want to decide whether $Q_1$ and $Q_2$ can be merged into a sequence of LDS at most two. A \emph{down-pair} in a sequence is a pair $a,b$ with $a > b$ and $a$ preceding $b$ in the sequence. Then, $a$ and $b$ are the \emph{left element} and \emph{right element} of the down-pair, respectively. 

We will define a \emph{precedence graph} on the elements of $Q_1$ and $Q_2$. There are two kinds of edges: \emph{strict edges} and \emph{tentative edges}. A strict edge $x \rightarrow y$ indicates that $x$ must be before $y$ in the output. Tentative edges come in pairs that share a common element. If $b \dashrightarrow e \dashrightarrow a$ are a pair of tentative edges, then either $b$ must be before $e$ or $e$ must be before $a$ in the output; so the arrangement $a, e, b$ is forbidden. 

In a sequence of LDS at most two, we have three kinds of elements: left elements in a down-pair, right elements in a down-pair, and elements that are in no down-pair. We call such elements \emph{solitaires}. Note that no element can be a left and a right element, as this would imply a decreasing sequence of length three. For a solitaire, all preceding elements are smaller and all succeeding elements are larger. We have the following precedence edges.

\begin{enumerate}[a)]
	\item For each sequence and elements $x$ and $y$, we have the edge $x \rightarrow y$ if $x$ precedes $y$ in the sequence.
	\item Assume $a,b$ is a down-pair in one of the sequences and $c,d$ a down-pair in the other sequence.
	We have the edges $\min(a,c) \rightarrow \max(a,c)$ and $\min(b,d) \rightarrow \max(b,d)$.\footnote{We give two arguments for why these constraints are necessary. Assume $a > c$ and $a$ precedes $c$ in the merge. Then, $a,c,d$ is a decreasing sequence of length three. Assume $b > d$ and $b$ precedes $d$ in the merge. Then, $a, b, d$ is a decreasing sequence length three. Alternatively, two down-pairs $a,b$ and $c,d$ can either be disjoint, interleaving, or nested. Assume w.l.o.g.\ that $a > c$. If the pairs are nested, i.e., $a > c > d > b$, the only possible order is $c,a,b,d$. This is guaranteed by the edges $c \rightarrow a \rightarrow b \rightarrow d$. If the pairs are interleaving, i.e., $a > c > b > d$, or disjoint, i.e., $a > b > c > d$, the two possible orders are $c,a,d,b$ and $c,d,a,b$. This is guaranteed by the edges $a \rightarrow b$, $c \rightarrow d$, $c \rightarrow a$, and $d \rightarrow b$.} 
	\item Between two solitaires in different sequences, there are no constraints.
	\item Assume $a,b$ is a down-pair in one of the sequences and $e$ a solitaire in the other. We generate enough edges so that the elements cannot form a decreasing sequence of length three. 
	\begin{enumerate}[1)]
		\item If $e > a$, then $a \rightarrow e$.
		\item If $e < b$, then $e \rightarrow b$.
		\item If $a > e > b$, then we have the pair $e \dashrightarrow a$ and $b \dashrightarrow e$ of tentative edges. We call them \emph{companions} of each other. An edge may have many companions. For example, if $a,b$ and $a',b$ are down-pairs with $a > e > b$ and $a' > e > b$,  $e \dashrightarrow a$ and $e \dashrightarrow a'$ are companions of $b \dashrightarrow e$. 
		Similarly, if $a,b$ and $a,b'$ are down-pairs with $a > e > b$ and $a > e > b'$,  $b \dashrightarrow e$ and $b' \dashrightarrow e$ are companions of $e \dashrightarrow a$.
		
		Tentative edges only end in solitaires or left endpoints.\end{enumerate}
\end{enumerate}

\begin{lemma}\label{lemma:goodgraph} If the output sequence obeys all strict edges and one tentative edge of each pair of tentative edges, the output sequence has LDS at most two. \end{lemma}
\begin{proof} We give a proof by contradiction. A decreasing subsequence of length three must consist of a down-pair $a,b$ from one of the sequences and an element $e$ from the other sequence; $e$ is either a solitaire or part of a down-pair $c,d$ with $e \in \sset{c,d}$. 

Assume first that $e$ is a solitaire. If $e > a$, we have the edge $a \rightarrow e$, if $b > e$ we have the edge $e \rightarrow b$, and if $a > e > b$, we have the pair of tentative edges $b \dashrightarrow e \dashrightarrow a$. In either case, it is guaranteed that the three elements do not form a decreasing sequence of length three.
	
Assume next that $e$ is part of a down-pair $c,d$. We will argue that the generated edges guarantee that the four elements $a,b,c,d$ do not produce a decreasing sequence of length three in the output. We have the edges $a \rightarrow b$ and $c \rightarrow d$, $\min(a,c) \rightarrow \max(a,c)$ and $\min(b,d) \rightarrow \max(a,d)$. Assume w.l.o.g.\ that $a > c$. Then, we have the edge $c \rightarrow a$. If $b > d$, we also have the edge $d \rightarrow b$, and, hence, the two possible interleavings are $c,a,d,b$ and $c,d,a,b$. Since $c$ precedes $a$ and $d$ precedes $b$, a decreasing sequence can contain at most one of $a$ and $c$ and at most one of $b$ and $d$. Thus, its length is at most two. If $d > b$, we have the edge $b \rightarrow d$, and, hence, the only arrangement is $c,a,b,d$. Since $c$ precedes $a$ and $b$ precedes $d$, a decreasing sequence can contain at most one of $a$ and $c$ and at most one of $b$ and $d$. Thus, its length is at most two. 
\end{proof}

We next give an algorithm $\Gmrg$ for merging two sequences. It always considers the elements at the front of the two queues and decides which one to move to the output. It may also decide that the two sequences cannot be merged. If a front element is moved to the output, all edges incident to the element are deleted and the companions of all outgoing tentative edges are deleted. 
We call a front element of a sequence \emph{free} if it has no incoming edge, \emph{blocked} if is has an incoming strict edge, and \emph{half-free} if there is an incoming tentative edge but no incoming strict edge. Here are the rules.
\begin{enumerate}[a)]
	\item If only one of the front elements is free, move it. If both front elements are free, move the smaller one. 
	\item If both front elements are blocked, declare failure.
	\item If one front element is blocked and the other one is half-free, move the half-free and turn the companion edges of the incoming tentative edges into strict edges.
	\item If both front elements are half-free, move the smaller one. Turn the companion edges of the incoming tentative edges into strict edges.
\end{enumerate}

Figure~\ref{illustration} illustrates the algorithm. In the left example, there is no solitaire, and, hence, all edges are strict. The ordering $4, 7, 1, 2, 8, 9, 3, 6$ is a solution. Element 4 is free and is moved to the output. Then 7 is free, then 1, then 2, and so on. \newline
In the right example, element 5 is a solitaire and we have the pairs $1 \dashrightarrow 5 \dashrightarrow 7$, $3 \dashrightarrow 5 \dashrightarrow 7$ and $3 \dashrightarrow 5 \dashrightarrow 9$ of tentative edges. We proceed as above and move 4, 7, 1, and 2 to the output. When we move 7, the tentative edges $1 \dashrightarrow 5$ and $3 \dashrightarrow 5$ are turned into solid edges. When we move 1, the first of these edges is realized. However, after removing 2, 5 and 9 are blocked by the solid edges $3 \rightarrow 5$ and $8 \rightarrow 9$ respectively.

\begin{lemma}\label{lemma:completion} If $\Gmrg$ runs to completion, the output sequence obeys all strict edges and one of the tentative edges in each pair of tentative edges, and, hence, the output has LDS at most two. \end{lemma}
\begin{proof} All strict edges are obeyed since blocked elements are never moved to the output. Also, one edge of each pair of tentative edges is realized. This can be seen as follows. When we move an element to the output, all outgoing tentative edges are realized, and, hence, we may remove their companions. When we move a half-free element to the output, the incoming tentative edges are not realized. We compensate for this by making their companions strict. 
\end{proof}

\begin{lemma}\label{lemma:failure} If $\Gmrg$ fails, the two sequences cannot be merged.\end{lemma}
\begin{proof}  We assume that we start with two sequences that can be merged and consider the first moment of time when the algorithm makes a move that destroys mergeability. We will then consider all possible actions of the algorithm at this point and argue a contradiction.

Let $e$ and $f$ be the heads of the queues before the move of the algorithm. The algorithm moves $e$ to the output, but the only move that maintains mergeability is moving $f$. Then, $e$ is not blocked as the algorithm moves it.
	
	\begin{claim} $f$ is not blocked. \end{claim}
	\begin{claimproof} 
    Assume $f$ is blocked. Then, there is a strict edge $a \rightarrow f$ from an element $a$ in $e$'s queue. The strict edge was either created as a strict edge or as a tentative edge and later turned into a strict edge.
		
		Assume first that either $a$ or $f$ is a solitaire. If the edge was created as a strict edge, and $f$ is a solitaire, then $f > a$, and $a$ is the left element of a down-pair $a,b$. The decreasing sequence $f,a,b$ will be created, a contradiction. If the edge was created as a strict edge, and $a$ is a solitaire, then $f > a$, and $f$ is the right element of a down-pair $g,f$. The decreasing sequence $g,f,a$ will be created. If the edge was created as a tentative edge, and $f$ is a solitaire, then there is a down-pair $a',a$ in $e$'s input sequence with $a' > f > a$, and we had the tentative edges $a \dashrightarrow f \dashrightarrow a'$. Since the edge $a \dashrightarrow f$ is now a solid edge, $a'$ was already moved to the output. Thus, the decreasing sequence $a', f, a$ will be created. If the edge was created as a tentative edge, and $a$ is a solitaire, then there is a down-pair $f,f'$ in $f$'s input sequence with $f > a > f'$, and we had the tentative edges $f' \dashrightarrow a \dashrightarrow f$. Since $f'$ is after $f$ in $f$'s input sequence, the tentative edge $a \dashrightarrow f$ is still there as a tentative edge.
		
		We come to the case that $a$ and $f$ are elements of down-pairs. Then, either both are left elements or both are right elements in their down-pairs, 
        as we do not create edges (neither strict nor tentative) between left and right endpoints of down-pairs. Also, $a < f$ since edges always go from the smaller element to the larger element. 
        If both are left elements, and $a,b$ is a down-pair with $a$ on the left, the decreasing sequence $f,a,b$ will be created. If both are right elements, and $g,f$ is a down-pair with $f$ on the right, the decreasing sequence $g,f,a$ will be created.
		
		We have now shown that a blocked element never has to be moved to the output.\end{claimproof}

	So, we have to consider four cases, depending on which of $e$ and $f$ are free or half-free.
	
	Let $C = P\,f\,Q\,e\,R$ be a correct merging; here $P$, $Q$ and $R$ are sequences. We want to argue that $M = P\, e\, f\, Q\, R$ is also a correct merging. Assume otherwise. Then, there  must be a decreasing sequence of length 3 in $M$. As this decreasing sequence does not exist in $C$, it must involve $e$. All elements in $Q$ come from $f$'s queue, as no element from $e$'s queue can overtake $e$ in $C$.
	
	If the decreasing subsequence in $M$ does not involve $f$ (this is definitely the case if $e < f$), it must contain an element from $Q$, say $y$, as it would otherwise also exist in $C$. 
	
	If $f > e$, all elements in $P$ must be smaller than $f$, and all elements in $R$ must be larger than $e$, as $C$ would otherwise contain a decreasing subsequence of length three.

	\textbf{$e$ and $f$ are free:} We have $e < f$ since the algorithm moves $e$. So, the decreasing sequence cannot contain $f$. Also, $e$ cannot be the first element of the decreasing sequence as replacing the first element by $f$ would otherwise result in a decreasing sequence of length three in $C$. It cannot be the last element of the decreasing sequence, as the sequence contains an element of $Q$. So, the decreasing sequence has the form $x\, e\, y$ with $x$ in $P$ and $y$ in $Q$. Since $f > e > y$, $f,y$ is a down-pair in $f$'s sequence. If $e$ is a solitaire, we have the tentative edge $e \dashrightarrow f$ (Rule d3), and, hence, $f$ is not free, a contradiction. If $e$ is the left element in a down-pair, we have a strict edge $e \rightarrow f$ (Rule b1), a contradiction. If $e$ is the right element in a down-pair, we have the strict edge $y \rightarrow e$, a contradiction (Rule b1).
	
	\textbf{$e$ is free and $f$ is half-free:} Since $f$ is half-free, there is a tentative edge $d \dashrightarrow f$ from an element $d$ in $e$'s queue. Thus, $f$ is a solitaire or a left endpoint.
	
	If $f$ is a solitaire, there is a down-pair $c,d$ in $e$'s sequence with $c > f > d$; $d$ is still in $e$'s queue. 
	If $e < f$, the decreasing sequence has the form $x\, e\, y$ with $x$ in $P$ and $y$ in $Q$ by the same argument as in the first case. Recall that all elements in $Q$ come from $f$'s queue. Thus, $f,y$ is a down-pair, a contradiction to $f$ being a solitaire. 
	If $e > f$, $e,d$ is a down-pair with $e > f > d$. Thus, we have the tentative edge $f \dashrightarrow e$ (Rule d3), and $e$ is not free, a contradiction.
	
	If $f$ is a left endpoint, $d$ is a solitaire in $e$'s sequence. We have $f > d$. Since $d$ is a solitaire, we have $e < d$, and, hence, $e < f$. By the same argument as in the first case, the decreasing sequence has the form $x\, e\, y$ with $x$ in $P$ and $y$ in $Q$. Thus, $f,y$ is a down-pair with $f > d > e > y$. If $e$ is a left endpoint, we have the strict edge $e \rightarrow f$ (Rule b1), and $f$ is blocked. If $e$ is a right endpoint, we have the strict edge $y \rightarrow e$ (Rule b1), and $e$ is not free. If $e$ is a solitaire, we have the tentative edges $y \dashrightarrow e \dashrightarrow f$ (Rule d3), and $e$ is not free. In either case, we derived a contradiction. 
	
	\textbf{$e$ is half-free and $f$ is free:} This case cannot arise, as the algorithm would move $f$. 
	
	\textbf{$e$ and $f$ are half-free:} We have $e < f$, since the algorithm moves $e$. By the same argument as in the first case, the decreasing sequence has the form $x\, e\, y$ with $x$ in $P$ and $y$ in $Q$.
	
	$f,y$ is a down-pair in $f$'s sequence. If $e$ is the left element in a down-pair, we have a strict edge $e \rightarrow f$ (Rule b1), a contradiction. If $e$ is the right element in a down-pair, we have the strict edge $y \rightarrow e$ (Rule b1), a contradiction.
	
	If $e$ is a solitaire, all elements in $P$ that come from $e$'s queue are smaller than $e$, and all elements in $R$ are larger than $e$. This implies that $x$ comes from $f$'s queue and precedes $f$ in its input sequence, and, hence, $x,y$ is a down-pair in $f$'s queue with $x > e > y$. So, we have the tentative edges $y \dashrightarrow e \dashrightarrow x$ (Rule d3). When $x$ was moved to the output, the tentative edge $y \dashrightarrow e$ was converted to a strict edge, a contradiction.
\end{proof}

If there are no solitaires, the algorithm greatly simplifies. All edges of the precedence graph are strict, and we have mergeability if and only if the precedence graph is acyclic. A cycle in the precedence graph is a short certificate for non-mergeability. In the presence of solitaires, we have pairs of tentative edges and the question is whether there is a choice of tentative edges, one from each pair of tentative edges, such that the resulting graph is acyclic, or whether the graph is cyclic for any choice. We next show how to formulate this question as a 2-SAT problem. 

Let us use $q_1q_2 \ldots q_n$ and $p_1p_2\ldots p_m$ to denote the two input sequences. The edges defined by our rules b) and d) run between the two input sequences. For each such edge, we have a variable $x_e$: $x_e = \mathrm{true}$ if the output realizes $e$, and $x_e = \mathrm{false}$ otherwise. We say that two edges
$q_h p_i$ and $p_j q_k$ cross if and only if $i < j$ and $k < h$ or $i = j$ and $h = k$. If there are crossing strict edges, we have non-mergeability. We have the following clauses.
\begin{itemize}
	\item For each strict edge $e$, the singleton clause $x_e$.
	\item For each companion pair $e$ and $e'$ of tentative edges, $x_e \vee x_{e'}$ and $\neg x_e \vee \neg x_{e'}$
	\item For each pair of crossing edges $e$ and $e'$, $\neg x_e \vee \neg x_{e'}$.
\end{itemize}
Strict edges must be realized. From each pair of companion edges, exactly one has to be realized. From each pair of crossing edges, at most one should be realized. Let $F$ be the conjunction of the clauses defined above. 

\begin{lemma}\label{lemma:2sat} $F$ is satisfiable if and only if the two sequences are mergeable. \end{lemma}
\begin{proof} Assume $F$ is satisfiable. Let $E$ be the set of edges corresponding to true variables; $E$ includes all strict edges, exactly one out of each pair of tentative edges, and at most one edge in each pair of crossing edges. Add the edges $q_iq_{i+1}$ for $1 \le i < n$ and $p_j p_{j+1}$ for $1 \le j < m$. The resulting graph is acyclic. Assume otherwise, and let $q_i$ and $p_j$ be the elements with minimum index in a cycle. Then, there must be an edge $p_h q_i$ with $h > j$ and $q_k p_j$ with $k > i$ or with $i = k$ and $j = h$. So, we have a pair of realized crossing edges which is impossible.
	
	Assume, conversely, that the sequences are mergeable. Then all strict edges are realized, and exactly one out of each companion pair of tentative edges is realized. Set the variables corresponding to realized edges to true. Then, the first two types of clauses are certainly true. The third type is also true since in a pair of crossing edges, at most one can be realized.
\end{proof}

For 2-SAT formulae, there are simple witnesses of satisfiability and non-satisfiability~\cite{Apsvall}, computable in polynomial time. We briefly review the argument. A satisfying assignment is a witness for satisfiability. For a witness of non-satisfiability, write all clauses involving two literals as two implications, i.e., replace $x \vee y$ by $\neg x \rightarrow y$ and $\neg y \rightarrow x$. We use $\rightarrow^*$ to denote the transitive closure of $\rightarrow$. Clearly, all singleton clauses need to be satisfied. Do so and follow implications.
If this leads to a contradiction, the formula is not satisfiable, and we have a witness of the form $x \rightarrow^* y$ and $x' \rightarrow^* \neg y$ for two singleton clauses $x$ and $x'$ and a literal $y$. After this process, there are no singleton clauses. Create a graph with the literals as the nodes of the graph and the implications as the directed edges. If there is a cycle containing a variable and its complement, the formula is not satisfiable. So, assume otherwise.
Consider any variable $x$. We may have the implication $x \rightarrow^* \neg x$ or the implication $\neg x \rightarrow^* x$, but we cannot have both, as we would have a cycle $x \rightarrow^* \neg x \rightarrow* x$ involving $x$ and $\neg x$. If we have only the former, set $x$ to false, if we have only the latter, set $x$ to true, and if we have neither, set $x$ to an arbitrary value. Follow implications. This cannot lead to a contradiction. Assume we have the implication $x \rightarrow^* \neg x$ and, hence, set $x$ to false and $\neg x$ to true. If this led to a contradiction, we would have implications $\neg x \rightarrow^* y$ and $\neg x \rightarrow^* \neg y$ for some $y$, and, hence, $y \rightarrow* x$, and, hence, $\neg x \rightarrow^* y \rightarrow^* x \rightarrow^* \neg x$. The argument for the alternative case is symmetric.

Figure~\ref{illustration} illustrates the algorithm and the connection to 2-SAT. In the 2-SAT problem we have to realize all solid edges. Focus on the example on the right. Since the solid edge $1 \rightarrow 2$ crosses the tentative edge $5 \dashrightarrow 7$, the tentative edge $3 \rightarrow 5$ has to be realized. However, it crosses $8 \rightarrow 9$, and, hence, the formula cannot be satisfied.

Figure~\ref{illustration2} gives a more substantial example. Here, the lower sequence contains three solitaires, namely 5, 8, and 12. The solitaire $5$ is enclosed by the down-pair $(6,3)$, and, hence, we have the tentative edges $3 \dashrightarrow 5 \dashrightarrow 6$. The solitaire $8$ is enclosed by the down-pair $(10,7)$, and, hence, $7 \dashrightarrow 8 \dashrightarrow 10$. The solitaire 12 is enclosed by $(13,11)$, and, hence, $11 \dashrightarrow 12 \dashrightarrow 13$. We also have the strict edges $1 \rightarrow 2$ and $15 \rightarrow 16$. The two sequences are not mergeable. Since we have the strict edge $1 \rightarrow 2$, we cannot realize $5 \dashrightarrow 6$. Thus, we must realize $3 \dashrightarrow 5$ and, hence, cannot realize $8 \dashrightarrow 10$. Thus, we must realize $7 \dashrightarrow 8$ and, hence, cannot realize $12 \dashrightarrow 13$. Thus, we must realize $11 \dashrightarrow 12$. This conflicts with the strict edge $15 \rightarrow 16$. 
\newline
Without the element 14, the two sequences can be merged into $4,6,1,2,10,3,5,13,7,8,16,11,12,15$. Instead of the solid edge $15 \rightarrow 16$, we have the tentative edges $11 \dashrightarrow 15 \dashrightarrow 16$. The former is realized.

We have now shown that mergeability to a sequence with LDS two can be decided in polynomial time. In the next section, we will establish linear time. 

\begin{figure}[t]
\centering
		\includegraphics[width=\textwidth]{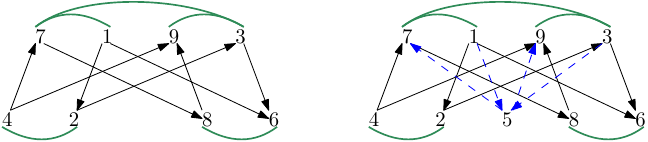}
	\caption{\label{illustration} The green brackets indicate down-pairs, the solid arrows strict edges, and the dashed arrows tentative edges
	}
\end{figure}

\begin{figure}[t]
\centering
		\includegraphics[width=0.5\textwidth]{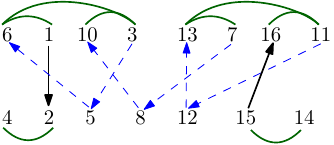}
	\caption{\label{illustration2} The green brackets indicate down-pairs, the solid arrows strict edges, and the dashed arrows tentative edges. Not all edges are drawn in the figure}
\end{figure}

\subsection{Merging in Linear Time}\label{Linear Time}

We give a linear time implementation of the algorithm in the preceding section. In particular, we show that there is no need to construct the precedence graph, but that the decision of which front element to move to the output can be made locally after a small amount of precomputation. This is the content of Lemma~\ref{mergability refined}. After the Lemma, we show how to compute the information required for the refined rules in linear time. Finally, we put all the pieces together and prove Theorem~\ref{intro:mergeability}.

\begin{lemma}\label{mergability refined} Let $c_i$ be the head of the $i$-th queue; here queue means the suffix of elements that have not been dequeued yet. 
\begin{enumerate}
\item If $c_i$ is smaller than all elements in the other queue, it is free.
\item If neither $c_1$ nor $c_2$ is a left element, the smaller of the two is free. 
\item If a $c_i$ is a left element, it is blocked iff the other queue contains a smaller left element. If there is such a left element in the other queue, it is the leftmost left element. 

If there is no such left element in the other queue, $c_i$ is half-blocked iff the other queue contains a solitaire $s$ such that $c_i > s > r_i$, where $r_i$ is the first right element after $c_i$. 
\item We now assume that at least one first element is a left element, and that both queues contain an element smaller than the first element in the other queue. 
\begin{enumerate}
\item If both first elements are left elements, the larger is blocked, and the smaller is non-blocked. 
\item Assume the first queue starts with a left element $\ell_1$, and the second queue starts with either a right element or a solitaire. Let $r_1$ be the first right element after $\ell_1$. Then, $\ell_1 > c_2 > r_1$. 
\item If one first element is a left element and the other a right element, the right element is blocked.
\item If one first element is a left element, the other is a solitaire, and a left element larger than the solitaire has already been dequeued, the solitaire is blocked. 
\item If one first element is a left element, the other is a solitaire, and no left element larger than the solitaire was already dequeued, the solitaire is half-blocked and smaller than the left element.  \end{enumerate}
\item The above covers all cases.
\end{enumerate}
\end{lemma}
\begin{proof}
\begin{enumerate}
\item All constraint edges between sequences go from smaller to larger elements. 
\item If $c_i$ is a right element or a solitaire, all elements after $c_i$ in the $i$-th queue are larger than $c_i$. Thus, if neither first element is a left element, the smaller first element is free. 
\item A solid incoming edge into a left element can only come from a smaller left element. The leftmost left element is the smallest left element in the other queue. 

A tentative incoming edge can only come from a solitaire with $c_i > s > r_i$, where $r_i$ is a right element in $c_i$'s queue. If there is such an $r_i$, it is the first right element. 
\item 
\begin{enumerate}
\item We have the strict edge $\min(\ell_1,\ell_2) \rightarrow \max(\ell_1,\ell_2)$, and, hence, the larger is blocked. The smaller is either free or half-blocked. 
\item Assume the first queue starts with a left element $\ell_1$. There must be a right element $r_1$ after it in the first queue. $r_1$ is the smallest element in the first queue, and, hence, $c_2 > r_1$. If the first element in the second queue is either a solitaire or a right element, it is the smallest element in the second queue. Thus, $\ell_1 > c_2$. 
\item Assume $c_1$ is a left element and $c_2$ is a right element. Let $r_1$ be the first right element in the first queue. Then, $r_2 > r_1$ by the preceding item, and, hence, we have a strict edge from $r_1$ to $r_2$. Hence, $r_2$ is blocked. 
\item Assume the first queue starts with a left element. Then, $\ell_1 > s_2 > r_1$, where $r_1$ is the first right element in the first queue. If a left element $\ell'_1$ with $\ell'_1 > s_2$ was already dequeued, the tentative edge $s_1 \dashedrightarrow \ell'_1$ was not realized, and, hence, the tentative edge $r_1 \dashedrightarrow s_1$ became solid. 
\item Assume the first queue starts with a left element. Then, $\ell_1 > s_2 > r_1$, where $r_1$ is the first right element in the first queue. All left elements already dequeued from the first queue are smaller than $s_2$, and, hence, all tentative edges $r \dashedrightarrow s_1$ from a right element in the first queue into $s_2$ are still tentative. We also have the tentative edges $r_1 \dashedrightarrow s_1 \dashedrightarrow \ell_1$. Thus, $s_1$ is half-blocked.  
\end{enumerate}
\item If neither first element is a left element, the smaller is free and moved to the output (item 2). If a first element is a left element, item 3 tells  whether the element is free, half-blocked, or blocked. If both first elements are left elements, item 4a tells which one to move. If one first element is a left element, and the other is either a right element or a solitaire, items 4c, 4d, and 4f tell the status of the other element. 
\end{enumerate}
\end{proof}

What is needed to implement the preceding Lemma?

\begin{itemize}
\item Each element needs to know its type: left or right element or solitaire. 
\item For each suffix of a queue, we need to know the smallest right element in the suffix, the smallest left element in the suffix, and the smallest solitaire in the suffix.  
\item The largest already dequeued left element. 
\end{itemize}

\begin{claim}\label{claim:const time} If the information above is available, all decisions in Lemma~\ref{mergability refined} can be made in constant time. The above information can be precomputed in linear time. \end{claim}
\begin{proof} We go through the various items of Lemma~\ref{mergability refined} and argue that the decision can be made based on the information above. For the decisions in cases 1, 2, 3, 4a, 4b, 4c, and 4d, the information of the first two items suffices. For the case distinction 4e or 4f, one also needs the third item. 

Item 3 needs an additional explanation. If $c_i$ is a left element, and we have $c_i > s > r_i$ for some solitaire in the other queue, and $r_i$ is the first right element in $c_i$'s queue, $s$ is not necessarily the first solitaire in the other queue. However, if the first solitaire in the other queue is smaller than $c_i$ and $r_i$, then it is smaller than all elements in $c_i$'s queue. Thus, the solitaire is free, and it does not matter, whether $c_i$ is free or half-blocked. 

An element is a right element if there is a larger element to the left of it. We scan the sequence from left to right and maintain the maximum. Left elements are determined by a right to left scan. Elements that are neither left nor right are solitaires. For the second item, we do a right to left scan, and the third item is kept during the execution. \end{proof}

The algorithm above either produces an output of LDS 2 or stops because both first elements are blocked. Can we in the latter case at least produce an output of LDS 3? Yes, we simply continue with the smallest element first strategy. 

\begin{claim}\label{claim:fail lds 3} If both first elements are blocked, continuing with the smallest element first strategy will produce an output of LDS 3. 
\end{claim}
\begin{proof} It is clear that smallest element first runs to completion. So, we only have to show that the output has LDS 3. We first show that among two left elements, the smaller is always output before the larger. This is clear if they come from the same queue, as the smaller precedes the larger. If they come from different queues, and one of them is output before the algorithm blocks, this holds true, because we have a strict edge from the smaller to the larger. So, assume that both are output by smallest element first strategy. Let $\ell_1$ and $\ell_2$ be two left elements with, say, $\ell_1 < \ell_2$, and assume that $\ell_2$ is output before $\ell_1$ by smallest element first. Consider the situation just before $\ell_2$ is output. All elements preceding $\ell_1$ in its queue are smaller than $\ell_1$, and, hence, the smallest element first strategy  will output them, and $\ell_1$ before $\ell_2$. 

So, a decreasing sequence can involve at most one left element. It can also involve at most one solitaire. So, if it has length four, it must involve two right elements, one from each queue. One of them must come from the same queue as the solitaire. But this is impossible, because solitaires are larger than all preceding elements and smaller than all succeeding elements. 
\end{proof}

Theorem~\ref{intro:mergeability} now follows from Lemma~\ref{mergability refined} and Claims~\ref{claim:const time} and~\ref{claim:fail lds 3}.

\section{Minimizing Down-Steps}\label{DownSteps}

In this section we switch focus to the number a down-steps, a different measure for ``sortedness''.
We present an algorithm $\Opt$ for minimizing the number of down-steps in a sequence using $k$ queues. This algorithm runs in time $O(n \log k)$, where $n$ is length of the sequence. The algorithm is online and optimal in a strong sense. Firstly, no algorithm can produce an output with fewer down-steps than $\Opt$ does. Secondly, if the input has $D$ down-steps, the output has at most $\floor{D/k}$ down-steps. Thirdly, if an online algorithm deviates from $\Opt$, it is suboptimal, i.e., the input sequence can be extended so that the algorithm produces an output with a strictly larger number of down-steps.

The result sheds some light on whether having several multilane buffers in a row is better than having all lanes in parallel in a single buffer. For example, with 6 queues in parallel, the number of down-steps can be reduced to $\floor{D/6}$ and with three 2-line buffers in a row, the number of down-steps can be reduced to $\floor{\floor{\floor{D/2}/2}/2}$. The latter number is generally smaller. We come back to this example in the conclusion.
 
The algorithm is as follows: Recall that we have $k$ queues, and our input is a sequence of length $n$. For simplicity, we assume that all numbers in the input are non-negative. 
We define the enqueuing and dequeuing strategy of our algorithm $\Opt$ and then prove its optimality.

  \paragraph*{Enqueuing:} The queues are numbered $0$ to $k - 1$. We denote the last element in queue $i$ by $\ell_i$. Initially, when all queues are empty, $\ell_i$ has the fictitious value $-i-1$. One of the queues is the \emph{base queue}; we use $b \in [k-1]$ to denote the base queue. Initially, $b = 0$. We maintain the following \emph{invariant} at all times: 
  \[ \ell_b > \ell_{b+1} > \ldots > \ell_{b + k - 1}.\]
  Indices are to be read modulo $k$, and the base queue is always the queue, where the numbering of the queues starts. Note that the fictitious values are chosen such that the invariant holds initially.

  Assume now that element $e$ is to be enqueued. Let $i \in \sset{0,1,\ldots, k-1}$ be minimal (if any) such that $e > \ell_{b + i}$. If $i$ exists, either $i = 0$ and $e > \ell_b$, or $i > 0$ and $\ell_{b+i - 1} > e > \ell_{b + i}$.
  \begin{itemize}
  \item If $i$ exists, we append $e$ to queue $b +i$. This does not create a down-step and maintains the invariant.
  \item If $i$ does not exist, i.e., $\ell_{b + k - 1} > e$, we append $e$ to queue $b$ and increase $b$ by 1. This creates a down-step in queue $b$ unless $e$ is its first element. The invariant is maintained.
  \end{itemize}

  A \emph{run} is an increasing sequence. At the end of the enqueuing, all queues contain an equal number of runs up to one. Note that the down-steps are generated in round-robin fashion. The first down-step is created in queue $0$, the second in queue $1$, and so on. Let $C$ be the number of times $b$ is increased. We claim $C \le D$, where $D$ is the number of down-steps in the input sequence. Indeed, an input element $e$ that causes a change of base queue is smaller than the last element of all queues. Since the element preceding $e$ in the input is the last element of some queue, a change of base queue can only be caused by a down-step in the input sequence. Thus $C \le D$. The number of runs in queue zero is 1 if $C < k$ and increases by one for every $k$ increases of $C$. Thus the number of runs in queue zero is at most $\floor{C/k} + 1$, and the number of down-steps in queue zero is one less. We conclude that the number of down-steps in any queue is at most $\floor{C/k}$, which in turn is at most $\floor{D/k}$.

  The enqueuing strategy is inspired by Patience Sort~\cite{aldous1999longest}. Patience Sort sorts a sequence of numbers using a minimum number of queues. It is as above with one difference: if $e$ cannot be appended to an existing run, then it opens up a new queue.

  \paragraph*{Dequeuing:} We first merge the first runs in all queues into a single run, then merge the second runs in all queues, and so on. So the number of runs in the output sequence is the maximum number of runs in any of the queues.

  It is clear that the algorithm can be made to run in time $O(n \log k)$. We keep the values $\ell_i$ in an array $A$ of size $k$.  When an element $e$ is to be enqueued, we perform binary search on the sorted array $A[b .. b + k - 1]$; again indices are to be read modulo $k$.

  In the next section, we prove the optimality of $\Opt$ and in 
Section~\ref{Sequence Decomposition}, we characterize the runs generated by the algorithm in the different queues without reference to the algorithm.

\subsection{Optimality and Uniqueness of {\texorpdfstring{$\Opt$}{Opt}}}
\label{Transformation}

Let $A$ be any algorithm. We will show that the queue contents constructed by $A$ can be transformed into the contents constructed by $\Opt$ without increasing the number of down-steps. Dequeuing will be considered later. 

Let $t$ be minimal such that $A$ and $\Opt$ differ, i.e, $A$ enqueues $e = e_t$ into queue $i$, and $\Opt$ enqueues $e$ into queue $j$ with $i$ different from $j$. We concentrate on the $i$-th and the $j$-th queue. Let $a_i$ and $a_j$ be the contents of these queues just before enqueuing $e_t$, and let $s_i$ and $s_j$ be their continuations by $A$; $s_i$ starts with $e$.
We swap $s_i$ and $s_j$, i.e, the contents of the $i$-th queue become $a_i s_j$, and the contents of the $j$-queue becomes $a_j s_i$. The number of down-steps stays the same except maybe for the down-steps at the borders between the $a$-parts and the $s$-parts.

We use $\ell(a_i)$, $\ell(a_j)$, $f(s_i)$, and $f(s_j)$ to denote the last and first elements of $a_i$, $a_j$, $s_i$ and $s_j$, respectively. We have $e = f(s_i)$. We need to show that the number of down-steps in the pairs $(\ell(a_j), f(s_i))$ and $(\ell(a_i),f(s_j))$ is no larger than in the pairs $(\ell(a_i), f(s_i))$ and $(\ell(a_j),f(s_j))$.

We distinguish cases according to the action of $\Opt$. If $e$ is smaller than the last elements of all existing parts, $\Opt$ incurs a down-step and adds $e$ to the queue with largest last element. Otherwise, $\Opt$ appends $e$ to the queue whose last element is largest among the last elements smaller than $e$.

\begin{description}
\item[$e$ is smaller than all last elements:] Then all $k$ queues have a non-fictitious last element since $e$ is a non-negative number, and fictitious last elements are negative. Thus, $\ell(a_j) > \ell(a_i) > e$, where the first inequality follows from $i \not= j$, and the fact that the $j$-th queue has the largest last element, and the second inequality holds because $e$ is smaller than \emph{all} last elements. Thus, we have a down-step in $a_i s_i$ and in $a_j s_i$. Let $e' = f(s_j)$. If we have no down-step in $a_j s_j$, then we have no down-step in $a_i s_j$, and, hence, the swap does not increase the cost. If we have a down-step in $a_j s_j$, then $A$ incurred two down-steps before the swap, and, hence, the swap cannot increase the cost. The swap decreases the number of down-steps if $\ell(a_j) > e' > \ell(a_i)$.
\item[$e$ is larger than some last element:]
We have $\ell(a_j) < e = f(s_i)$ because $\Opt$ enqueues after the largest last element that is smaller than $e$. Thus,
we have no down-step at the border from $a_j$ to $s_i$, and, hence, $A$ incurs at most one down-step after the swap. So, we only need to show that if the number of down-steps at the borders is zero before the swap, then it is zero after the swap.

Assume the number is zero before the swap. Then, $\ell(a_i) < e = f(s_i)$ and $\ell(a_j) < f(s_j)$. Also, $\ell(a_i) < \ell(a_j) < e$ since $\Opt$ enqueues $e$ into the queue with largest last element smaller than $e$. Thus, $\ell(a_i) < f(s_j)$, $\ell(a_j) < f(s_i)$, and no down-step will be introduced by the swap. 
\end{description}

We summarize: We have shown how to convert the enqueuing of any algorithm into the enqueuing of $\Opt$ without increasing the total number of down-steps. Moreover, $\Opt$ enqueues such that the number of down-steps in any two queues differs by at most one. Finally, the number of down-steps in the output is equal to the maximum number of down-steps in any queue. Thus, every algorithm must generate at least as many down-steps in the output as $\Opt$.

We will next show that $\Opt$ is the unique optimal \emph{online} algorithm. Recall that in an online algorithm the input is presented as a sequence $e_1,e_2. \ldots, e_j, \ldots$, and we have to pick a queue for element $e_j$ based on the current state of the queues and $e_j$. The processing of $e_j$ is independent of all the elements that follow it in the sequence.

\begin{lemma}
\label{lemma:downuniqu}
$\Opt$ is the unique optimal online algorithm.
\end{lemma}
\begin{proof} 
 Consider any other online algorithm $A$. We will show that if $A$ at some point deviates from $\Opt$, then there is a continuation of the input that will force $A$ to be worse.

 Let $e_1,e_2,\ldots,e_j,\ldots$ be the input sequence. When an element $e_j$ is to be enqueued, an online algorithm chooses the queue based on the value of $e_j$ and the current contents of the queues, but independently of the continuation $e_{j+1}, \ldots$. Assume now that $A$ is not identical to $\Opt$. Then, there is an input sequence $e_1,e_2,\ldots,e_j$ such that $A$ and $\Opt$ act the same up to element $e_{j-1}$ but choose different queues for $e_j$. We will show that there is a continuation $e_{j+1}, \ldots$ that forces $A$ to incur more down-steps than $\Opt$.

Let us first assume that all queues are in use when $e$ in enqueued. We may assume $b = 0$. So, $\ell_0 > \ell_1 >  \ldots > \ell_{k-1}$, and these are also the tails of the queues for algorithm $A$. An element $e$ arrives.

Assume first that there is an $i$ such that $i = 0$ and $e > \ell_1$ or $\ell_{i-1} > e > \ell_i$. Algorithm $\Opt$ adds $e$ to the $i$-th queue, and algorithm $A$ adds $e$ to the $j$-th queue with $j \not= i$.

If $j < i$, $A$ incurs a down-step, and we have an input sequence on which $A$ does worse.

If $j > i$, we have to work harder. The final elements of the queues are now:
\begin{align*} \text{alg.\ $\Opt$}\quad&  \ell_0 > \ldots > \ell_{i-1} > e > \ell_{i+1} >\ldots > \ell_{j-1} > \ell_j >  \ell_{j+1} \ldots > \ell_{k-1};\\
  \text{alg.\ A}\quad&  \ell_0 > \ldots > \ell_{i-1} > e > \ell_i  > \ell_{i+1} > \ldots > \ell_{j-1} > \ell_{j+1} > \ldots > \ell_{k-1}.
\end{align*}
We now insert $k - 1 - i$ elements that algorithm $\Opt$ can insert without incurring a down-step, but A cannot. More precisely, we insert the decreasing sequence $\ell_{i+1} + \epsilon$, \ldots, $\ell_j + \epsilon$, \ldots $\ell_{k-1}+ \epsilon$, where $\epsilon$ is an infinitesimal. Algorithm $\Opt$ can insert these elements into queues $i+1$ to $k-1$ without incurring a down-step. Algorithm A either has to put two of these elements in the same queue or one element in one of the first $i+1$ queues. In either case, it incurs a down-step.

We come to the case $e < \ell_k$. Algorithm $\Opt$ puts $e$ into queue 0, and algorithm A puts $e$ into queue $j > 0$. The final elements of the queues are now:
\begin{align*} \text{algorithm $\Opt$}\quad&  \ell_1 > \ell_2 >\ldots > \ell_{j-1} > \ell_j >  \ell_{j+1} \ldots > \ell_{k-1} > e;\\
  \text{algorithm A}\quad&  \ell_0 > \ldots > \ell_{j-1} > \ell_{j+1} > \ldots > \ell_{k-1} > e.
\end{align*}
We next insert $\ell_j +\epsilon$, $\ell_{j+1} + \epsilon$, \ldots, $\ell_{k-1} + \epsilon$. Algorithm $\Opt$ inserts these elements without incurring a  down-step, but algorithm A must incur a down-step. The argument is as above.

We have now handled the situation when $A$ differs from $O$ after both algorithms fill all queues. We next deal with the situation, where only $h < k$ queues are used, and we have $\ell_0 >  \ldots > \ell_{h-1}$.
An element $e$ is added.
If $e > \ell_{h-1}$, we argue as above. If $e < \ell_{h-1}$ and A does not open a new queue, it incurs a down-step. If A opens a new queue, it does the same as $\Opt$.
\end{proof}

We have now completed the proof of Theorem~\ref{intro:downsteps}.

\subsection{Sequence Decomposition}\label{Sequence Decomposition}

In this section, we characterize the set of runs determined by algorithm $\Opt$. For patience sort such a characterization is easy. Queue zero contains all left-to-right maxima of the input sequence, queue one contains the left-to-right maxima of the sequence obtained by deleting the elements in queue zero, and so on. Alternatively, queue $\ell$ contains all elements of rank $\ell + 1$, where the rank of an element is the length of the longest decreasing subsequence ending in it.

Let $I_0$ be the input sequence. We will define a run $s_0$ in $I_0$ and obtain a modified input sequence $I_1$ by deleting $s_0$ from $I_0$. We will define a run $s_1$ in $I_1$ and obain a modified input sequence by deleting it from $I_1$. Generally, we will define a run $s_j$ in $I_j$ and obtain $I_{j+1}$ by deleting it. Assume we have constructed $I_j$. Intially, $j = 0$. 

Let $f_j$ be the leftmost element in $I_j$ in which a decreasing sequence of length $k+1$ ends. If there is no such element, let $f_j$ be a fictitious element after the end of $I_j$.
Let $P_j$ be the prefix of $I_j$ ending just before $f_j$, and let $s_j$ be the sequence of left-to-right maxima of $P_j$, i.e., $s_j$ starts with the first element of $P_j$ and is then always extended by the first element that is larger than its last element.

\begin{theorem} Algorithm $\Opt$ constructs runs $s_i$, $s_{i+k}$, $s_{i + 2k},\dots$ in queue $i$, $0 \le i < k$. Exactly the non-fictitious elements $f_j$ create down-steps. \end{theorem}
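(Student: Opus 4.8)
The plan is to show by induction on $j$ that the algorithm $\Opt$ enqueues exactly the elements of $s_j$ into queue $j \bmod k$ as the $(\lfloor j/k \rfloor + 1)$-st run of that queue, and that a down-step is created precisely when a non-fictitious $f_j$ is processed. I would first establish the key structural fact underlying the whole argument: at the moment the algorithm is about to process the element $f_j$, the last elements of the $k$ queues (in base order) form exactly the decreasing $k$-tuple consisting of the last elements of the current runs being built, and $f_j$ is smaller than all of them. The reason is the definition of $f_j$: it is the leftmost element in which a decreasing sequence of length $k+1$ ends. So just before $f_j$, greedily extending runs (one per queue, as the enqueuing rule does) uses all $k$ queues, and a decreasing sequence of length $k$ ending just before $f_j$ is ``witnessed'' by the $k$ queue tails; since a decreasing sequence of length $k+1$ ends at $f_j$, the value $f_j$ lies below all $k$ of these tails, so the enqueuing rule appends $f_j$ to the base queue $b$, creates a down-step, and advances $b$. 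This matches the claim that $f_j$ starts the next run in queue $b = j \bmod k$ and that exactly the non-fictitious $f_j$ create down-steps.

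Next I would argue that the elements strictly between $f_{j-1}$ and $f_j$ (i.e.\ the ``body'' of $I_{j-1}$ after deleting $s_{j-1}$, restricted to the relevant window) are distributed by $\Opt$ into the $k$ current runs exactly as the greedy leftmost-increasing-subsequence decomposition does, with the run started at $f_{j-1}$ playing the role of run $s_j$ in queue $b$. The point is that the enqueuing rule of $\Opt$ appends $e$ to the queue with the largest tail that is still smaller than $e$ whenever such a queue exists, and appends to $f_j$'s starting queue otherwise. This is exactly the behavior of running $k$ parallel greedy increasing runs under the ``smallest-element-first''/patience-style discipline. So the run that $\Opt$ builds in queue $j \bmod k$ and closes off at the next down-step is exactly $s_j$: it starts with the first element after the previous runs were exhausted in that queue, namely $f_{j-k}$ or the first element of $I_0$, and it is extended precisely by the first later element exceeding its current last element, which is the definition of $s_j$ as the leftmost increasing subsequence of the prefix of $I_j$ up to $f_j$. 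One must check that deleting $s_j$ from $I_j$ to get $I_{j+1}$ is consistent with the algorithm: once a run is closed in a queue it is never touched again until dequeuing, and the subsequent enqueuing decisions depend only on the tails of the other (still-open) runs, which are exactly the tails present in $I_{j+1}$'s greedy decomposition.

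Finally, the dequeuing claim follows immediately: $\Opt$ merges the $t$-th run of every queue into the $t$-th run of the output, so the runs produced are $s_0, s_1, \dots$ in order, which are increasing and together partition $I_0$; the number of down-steps in the output equals the number of non-fictitious $f_j$, matching the lower bound from the earlier theorems. The main obstacle I expect is the bookkeeping in the inductive step: making precise that the queue tails seen by $\Opt$ while processing the window $(f_{j-1}, f_j]$ coincide with the last elements of the partial runs $s_{j-k+1}, \dots, s_j$ in the right cyclic order, and that the indices line up modulo $k$ (in particular that the down-steps, hence base-queue advances, occur in exactly round-robin fashion so that run $s_j$ lands in queue $j \bmod k$). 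This requires carefully tracking which runs are ``open'' versus ``closed'' at each point and verifying the invariant $\ell_b > \ell_{b+1} > \dots > \ell_{b+k-1}$ is exactly the tuple of open-run tails; everything else is a routine unwinding of the greedy/patience-sort definitions.
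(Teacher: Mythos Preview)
Your approach is correct and follows the same inductive skeleton as the paper: show that after each $f_j$ the state of the $k$ open runs matches the patience-sort decomposition of $I_{j+1}$, so the base-queue run is exactly $s_j$ and $f_j$ forces the next down-step. The paper's proof is considerably shorter because it names the one concept you are implicitly using but never isolate: the \emph{rank} of an element, i.e.\ the length of the longest decreasing subsequence ending at it. The standard patience-sort fact is that, within a phase, an element of rank $r$ is placed in queue $b+r-1$; once you say this, it is immediate that the rank-$1$ elements of $I_j$ (which are precisely $s_j$) land in the base queue, that $f_j$ has rank $k+1$ and hence lies below all $k$ tails, and that after removing $s_j$ every remaining element's rank in $I_{j+1}$ drops by one, which re-establishes the invariant with the base shifted. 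Your ``main obstacle'' paragraph---tracking which runs are open, lining up indices modulo $k$, verifying the tails coincide with the open-run ends---dissolves into this single observation.

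One point where your write-up is slightly loose: you speak of ``the elements strictly between $f_{j-1}$ and $f_j$'' as the body of a single $s_j$, but in $I_0$ several of the $s_\ell$ are interleaved in that window (e.g.\ $s_{j-k+1},\dots,s_j$ are all still being extended there). You do correct for this later by talking about the tuple of open-run tails, so the argument goes through, but phrasing it via rank in $I_j$ from the start avoids the detour.
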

\begin{proof} We will show by induction on the number of elements already processed. Let $j$ be the number of times $b$ was increased. Then $b = j \bmod k$. Then the last run in queue $q_{b + \ell}$, $0 \le \ell < k$ consists of all already processed elements of $I_j$ of rank $\ell + 1$ with respect to $I_j$.
Initially, $j = 0$, $b = 0$, and no element is processed, and all queues are empty. So the claim holds. 

Let $e$ be the next element of the input, and let $r$ be its rank with respect to $I_j$.  

If $r \le k$, $e$ is added to queue $q_{b + r - 1}$. This can be seen as follows. Assume $e$ is added to queue $b + \ell$, where $\ell$ is smallest such that the last element of queue $b + \ell$ is smaller than $e$. Since the last elements of the queues form a decreasing sequence, we have $r \ge \ell + 1$.  On the other hand, the last run in each queue can contain at most one element of a decreasing sequence ending in $e$, and hence $\ell + 1 \le r$. We have now reestablished the induction hypothesis. 

If $r = k + 1$, $s_j$ is the current contents of $q_b$, $I_{j+1}$ is obtained from $I_j$ by deleting $s_j$, $e$ is enqueued into $q_b$ and $b$ is increased by one. The rank of the elements in $_{j+1}$ up to and including $e$ is one less than their rank in $I_j$. We have again reestablished the induction hypothesis. 
\end{proof}

\section{Conclusion}\label{conclusion}

Our research was motivated by a problem in car manufacturing, yet we considered an abstract and idealized version of a lane buffer, e.g., unbounded capacity per lane, to investigate foundational theoretical properties. Nevertheless, it is fair to ask whether our insights also lead to advice for the practitioner. To this end, consider a typical layout of a lane buffer with $k=2p$ parallel lanes as sketched in Figure~\ref{fig:layouts}a for $p=3$. We have shown in this paper that this variant will reduce the number of down-steps of an input sequence with $D$ down-steps to $\floor{D/k}$ or less. However, if we pair the lanes and alternate their directions from one pair to the other, we can obtain a series of $p$ buffers with two lanes each within the same bounding box as shown in Figure~\ref{fig:layouts}b, where we feed the output of two parallel queues again into two queues, and so on. Hence, the number of down-steps is reduced to $\floor{\floor{\ldots\floor{D/2}\ldots /2}/2}$, where we have a nesting of $p$.  More concisely, the number of down-steps is $\floor{D/2^p}$. The latter number of down-steps is generally much smaller. A disadvantage of this layout is that paths through the buffer get longer. However, it is not necessary to remove the connections from the beginnings to the ends of lanes in alternating direction, so that we can use such segments as bypasses as shown in Figure~\ref{fig:layouts}c to reduce the length of a shortest path through the buffer to the same length as in Figure~\ref{fig:layouts}a.

\begin{figure}[htp]
\resizebox{\textwidth}{!}{ %
\begin{tikzpicture}[>=stealth]

\begin{scope}[xshift=0cm]
  \draw[->] (-1,0) -- (-0.5,0);
  \draw[-] (-0.5,0) -- (0,0);
  \foreach \x in {0, 1, 2, 3, 4} {
    \draw[-] (\x,0) -- (\x+1,0);
    \draw[->] (\x+0.5,0) -- ++(0.1,0);
    \draw[-] (\x,-3) -- (\x+1,-3);
    \draw[->] (\x+0.5,-3) -- ++(0.1,0);
  }
  \draw[->] (5,-3) -- (5.5,-3);

  \foreach \x in {0, 1, 2, 3, 4, 5} {
    \draw[->] (\x,0) -- (\x,-1.5);
    \draw[-] (\x,-1.5) -- (\x,-3);
  }

  \node at (3,-4) {(a)};
\end{scope}

\begin{scope}[xshift=7cm]
  \draw[->] (-1,0) -- (-0.5,0);
  \draw[-] (-0.5,0) -- (0,0);
  \foreach \x in {0, 2, 4} {
    \draw[-] (\x,0) -- (\x+1,0);
    \draw[->] (\x+0.5,0) -- ++(0.1,0);
    \draw[-] (\x,-3) -- (\x+1,-3);
    \draw[->] (\x+0.5,-3) -- ++(0.1,0);
  }
  \foreach \x in {1} {
    \draw[-] (\x,-3) -- (\x+1,-3);
    \draw[->] (\x+0.5,-3) -- ++(0.1,0);
  }
  \foreach \x in {3} {
    \draw[-] (\x,0) -- (\x+1,0);
    \draw[->] (\x+0.5,0) -- ++(0.1,0);
  }
  \draw[->] (5,-3) -- (5.5,-3);

  \foreach \x in {0, 1, 4, 5} {
    \draw[->] (\x,0) -- (\x,-1.5);
    \draw[-] (\x,-1.5) -- (\x,-3);
  }
  \foreach \x in {2, 3} {
    \draw[->] (\x,-3) -- (\x,-1.5);
    \draw[-] (\x,-1.5) -- (\x,0);
  }

  \node at (3,-4) {(b)};
\end{scope}

\begin{scope}[xshift=14cm]
  \draw[->] (-1,0) -- (-0.5,0);
  \draw[-] (-0.5,0) -- (0,0);
  \foreach \x in {0, 1, 2, 3, 4} {
    \draw[-] (\x,0) -- (\x+1,0);
    \draw[->] (\x+0.5,0) -- ++(0.1,0);
    \draw[-] (\x,-3) -- (\x+1,-3);
    \draw[->] (\x+0.5,-3) -- ++(0.1,0);
  }
  \draw[->] (5,-3) -- (5.5,-3);

  \foreach \x in {0, 1, 4, 5} {
    \draw[->] (\x,0) -- (\x,-1.5);
    \draw[-] (\x,-1.5) -- (\x,-3);
  }
  \foreach \x in {2, 3} {
    \draw[->] (\x,-3) -- (\x,-1.5);
    \draw[-] (\x,-1.5) -- (\x,0);
  }

  \node at (3,-4) {(c)};
\end{scope}
\end{tikzpicture} 
} %
\vspace{-0.5cm}
\caption{Layout variations of a lane buffer covering the same space on the floor shop. The items enter on the top left and leave on the bottom right. }\label{fig:layouts}
\end{figure}

We conclude with some open problems:
\begin{itemize}
    \item 
    Our results depend on the fact that the queues are used in parallel. As discussed above, we can improve by pairing queues and use them sequentially. It is an interesting open problem to consider the general setting, where the queues are arranged in an arbitrary directed acyclic network. Is pairing the queues, using the pairs sequentially, and adding bypasses optimal?

    \item Theorem~\ref{intro:mergeability} only applies to two sequences of LDS 2. Can we generalize this result to multiple queues and larger LDS?

    \item Besides the two measures of disorder LDS and the number of down-steps considered here, it will be interesting to develop other measures and understand their algorithmic properties.

    \item Stacks, which obey the Last-In-First-Out(LIFO) principle, have also been considered for sorting sequences~\cite{tarjan1972sorting,golumbic2004algorithmic}. It is an interesting problem to characterize the sorting power of $k$ parallel stacks, for any fixed number $k$.
\end{itemize}

\section*{Declarations}
\bmhead{Funding}
This work was supported by the European Regional Development Fund (ERDF).

\bmhead{Competing interests}
The authors have no relevant financial or non-financial interests to disclose.

\bibliography{ref,local,lib}

\end{document}